\def\cameraReady{} 
\newtheorem{lemma}{Lemma}
\newtheorem{theorem}{Theorem}
\newtheorem{definition}{Definition}
\newtheorem{observation}{Observation}
\newcommand{\threechain}{$3$-chain DiemBFT\xspace}
\newcommand{\vaba}{$2$-chain VABA\xspace}
\newcommand{\async}{\texttt{Ditto}\xspace}
\newcommand{\sync}{\texttt{Jolteon}\xspace}
\newcommand{\latency}{block-commit latency\xspace}
\newcommand{\etelatency}{end-to-end latency\xspace}
\newcommand{\timeout}{\texttt{timeout}\xspace}
\newcommand{\hash} {H}
\newcommand{\hqc} {qc_{high}}
\newcommand{\rank} {rank}
\newcommand{\mode} {\texttt{fallback}}
\newcommand{\mvba} {\texttt{MVBA}\xspace}
\newcommand\sig[1]{\langle #1 \rangle\xspace}
\newcommand\thsig[1]{\{ #1 \}\xspace}
\newcommand\block[1]{[ #1 ]\xspace}
\newcommand{\timeoutvalue}{$\tau$\xspace}
\algnewcommand{\LineComment}[1]{\Statex \hskip\ALG@thistlm {\color{gray}\textrm{// #1}}}
\newtcolorbox{mybox}[1][]{
enhanced,
colback=white,
boxsep=0pt,
#1
} 
\renewcommand{\paragraph}[1]{\smallskip\noindent{\textbf{#1}}}
\newcommand\blue[1]{\textcolor{blue}{#1}\xspace}
\def\first{({i})\xspace}
\def\second{({ii})\xspace}
\def\third{({iii})\xspace}
\def\fourth{({iv})\xspace}
\begin{document}

\date{}

\title{\Large 
Jolteon and Ditto: Network-Adaptive Efficient Consensus with Asynchronous Fallback
}

\ifdefined\cameraReady
\author{
{\rm Rati Gelashvili}\\
Novi Research
\and
{\rm Lefteris Kokoris-Kogias}\\
Novi Research \& IST Austria
\and
{\rm Alberto Sonnino}\\
Novi Research
\and
{\rm Alexander Spiegelman}\\
Novi Research
\and
{\rm Zhuolun Xiang}\thanks{Lead author, part of work was done at Novi Research}
\\
University of Illinois at Urbana-Champaign
}
\else
\author{}
\fi

\maketitle

\begin{abstract}
Existing committee-based Byzantine state machine replication (SMR) protocols, typically deployed in production blockchains, face a clear trade-off:
(1) they either achieve linear communication cost in the happy path, but sacrifice liveness during periods of asynchrony, or (2) they are robust (progress with probability one) but pay quadratic communication cost.
We believe this trade-off is unwarranted since existing linear protocols still have asymptotic quadratic cost in the worst case.

We design \async, a Byzantine SMR protocol that enjoys the best of both worlds: optimal communication on and off the happy path (linear and quadratic, respectively) and progress guarantee under asynchrony and DDoS attacks. 
We achieve this by replacing the view-synchronization of partially synchronous protocols with an asynchronous fallback mechanism at no extra asymptotic cost.
Specifically, we start from HotStuff, a state-of-the-art linear protocol, and gradually build \async. As a separate contribution and an intermediate step, we design a 2-chain version of HotStuff, \sync, which leverages a quadratic view-change mechanism to reduce the latency of the standard 3-chain HotStuff.

We implement and experimentally evaluate all our systems. Notably, \sync's commit latency outperforms HotStuff by 200-300ms with varying system size. Additionally, \async adapts to the network and provides better performance than \sync under faulty conditions and better performance than VABA (a state-of-the-art asynchronous protocol) under faultless conditions. This proves our case that breaking the robustness-efficiency trade-off is in the realm of practicality.  
\end{abstract}
\section{Introduction}

\begin{table*}[h]
\centering
\resizebox{\linewidth}{!}{
\begin{tabular}{lcccc}
\toprule
& {\bf Problem} & {\bf Comm. Complexity} & {\bf Rounds} & {\bf Liveness} \\
\midrule

HotStuff~\cite{yin2019hotstuff}/DiemBFT~\cite{baudet2019state} & partially sync SMR & sync $O(n)$ & $7$ & not live if async \\
VABA~\cite{abraham2019asymptotically} & async BA & $O(n^2)$ & $E(16.5)$ & always live \\

\midrule

\sync & partially sync SMR & sync $O(n)$ & $5$ & not live if async \\
\async & async SMR with fast sync path & sync $O(n)$, async $O(n^2)$ & sync $5$, async $E(R+4)$ & always live \\

\bottomrule
\end{tabular}
}

\caption{
Theoretical comparison of our protocol implementations. For HotStuff/DiemBFT and our protocols, sync $O(n)$ assumes honest leaders. Communication complexity measures the cost per consensus decision (committed block). Rounds measure the \latency. $E(r)$ means $r$ rounds in expectation.
The async $E(R+4)$ latency of \async assumes a MVBA protocol of expected latency $R$. 
With the MVBA of~\cite{cryptoeprint:2024:1108} that has $9.5$ rounds, the expected async latency of \async is $E(13.5)$.
}

\label{table:comparison}
\end{table*}
The popularity of blockchain protocols has generated a surge in researching how to increase the efficiency and robustness of underlying consensus protocols used for agreement.
On the efficiency front, the focus has been on decreasing the communication complexity in the steady state
(``happy path''),
first to quasilinear~\cite{kogias2016enhancing} and ultimately to linear~\cite{yin2019hotstuff,gueta2019sbft}.
These protocols work in the eventually synchronous model and require a leader to aggregate proofs. However, handling leader failures or unexpected network delays requires quadratic communication and
if the network is asynchronous, there is no liveness guarantee.
On the robustness side, recent protocols~\cite{abraham2019asymptotically,lu2020dumbo,spiegelman2019ace} make progress by having each replica act as the leader and decide on a leader retroactively.
This inherently requires quadratic communication even under good network conditions when the adversary is strongly adaptive~\cite{abraham2019communication}.


We believe that in practice, we need the best of both worlds. An efficient happy path is beneficial for any production system, but for blockchains to support important (e.g. financial) infrastructure, robustness against asynchrony is also key.
First, unpredictable network delays are a common condition when running in a large-scale network environment, e.g. over the Internet. Second, the possibility of targeted DDoS attacks on the leaders of leader-based protocols motivates the leaderless nature of the asynchronous solutions.
Thus, we are interested if there are efficient systems that have a linear happy path and are robust against asynchrony.

This is an important question, posed as early as~\cite{kursawe2005optimistic},
and studied from a theoretical prospective~\cite{ramasamy2005parsimonious, spiegelman2020search},
but the existing blockchain systems still forfeit robustness for efficiency~\cite{yin2019hotstuff,gueta2019sbft,kogias2016enhancing}.
In this paper, we answer the above question with the first practical system, tailor-made to directly apply to the prominent HotStuff/DiemBFT~\cite{yin2019hotstuff,baudet2019state} family of protocols.
Our protocol, \async, combines the optimistic (good network conditions) efficient happy path with pessimistic (worst-case network conditions) liveness guarantees with no extra asymptotic communication cost.
\async is based on the key observation that when there is asynchrony or failures, the protocols with linear happy path still pay the quadratic cost, same as state-of-the-art asynchronous protocols (e.g., VABA~\cite{abraham2019asymptotically} or Dumbo~\cite{lu2020dumbo}) that provide significantly more robustness.
Specifically, \async replaces the pacemaker of HotStuff/DiemBFT (a quadratic module that deals with view synchronization) with an asynchronous fallback.

In other words, instead of synchronizing views that will anyway fail (timeout) due to asynchrony or faults, we fall back to an asynchronous protocol that robustly guarantees progress at the cost of a single view-change. Furthermore, \async switches between the happy path and the fallback without overhead (e.g. additional rounds), and continues operating in the same pipelined fashion as HotStuff/DiemBFT.
 
Leveraging our observation further, we abandon the linear view-change~\cite{yin2019hotstuff} of HotStuff/DiemBFT, since the protocol anyway has a quadratic pacemaker~\cite{baudet2019state}.
We present 
\sync, a protocol that's a hybrid of HotStuff~\cite{yin2019hotstuff} and classical PBFT~\cite{castro1999practical}.
In particular, \sync preserves the structure of HotStuff and its linearity under good network conditions while reducing the steady state \latency by 30\% using a 2-chain commit rule. This decrease in latency comes at the cost of a quadratic view-change ($O(n)$ messages of $O(n)$ size).
As the pacemaker is already quadratic, as expected, this does not affect performance in our experiments.
We shared our findings with the Diem Team who is currently integrating \sync into the next release of DiemBFT.

Since \sync outperforms the original HotStuff in every scenario, we use it as the basis for \async.
As shown experimentally under optimistic network conditions both \sync and \async outperform HotStuff \latency.
Importantly, \async's performance is almost identical to \sync in this case whereas during attack \async performs almost identically to our implementation of VABA~\cite{abraham2019asymptotically}. 
Finally, the throughput of \async is 50\% better than VABA in the optimistic path and much better than HotStuff and \sync under faulty (dead) leaders (30-50\% better) or network instability (they drop to 0).

In Table~\ref{table:comparison} we show a theoretical comparison of our work to HotStuff and VABA, for which we implement and evaluate the performance experimentally in Section~\ref{sec:evaluation}.
More comprehensive comparisons and related work are given in Section~\ref{sec:relatedwork}.

\section{Preliminaries}
\label{sec:prelim}
We consider a permissioned system that consists of an adversary and $n$ replicas numbered $1, 2, \ldots, n$, where each replica has a public key certified by a public-key infrastructure (PKI).
The replicas have all-to-all reliable and authenticated communication channels controlled by the adversary.
We say a replica multicasts a message if it sends the message to all replicas.
We consider a dynamic adversary that can adaptively corrupt up to $f$ replicas, referred as \emph{Byzantine}. 
The rest of the replicas are called \emph{honest}.
The adversary controls the message delivery times, but we assume messages among honest replicas are eventually delivered. 

An execution of a protocol is 
{\em synchronous} if all message delays between honest replicas are bounded by $\Delta$;
is {\em asynchronous} if they are unbounded;
and is {\em partially synchronous} if there is a global stabilization time (GST) after which they are bounded by $\Delta$~\cite{dwork1988consensus}.
Without loss of generality, we let $n=3f+1$ where $f$ denotes the assumed upper bound on the number of Byzantine faults, which is the optimal worst-case resilience bound for asynchrony, partial synchrony~\cite{dwork1988consensus}, or asynchronous protocols with fast synchronous path~\cite{blum2020network}.

\paragraph{Cryptographic primitives and assumptions.}
We assume standard digital signature and public-key infrastructure (PKI), and use $\sig{m}_i$ to denote a message $m$ signed by replica $i$.
We also assume a threshold signature scheme, where a set of signature shares for message $m$ from $t$ (the threshold) distinct replicas can be combined into one threshold signature of the same length for $m$.
We use $\thsig{m}_i$ to denote a threshold signature share of a message $m$ signed by replica $i$.
We also assume a collision-resistant cryptographic hash function $\hash(\cdot)$ that can map an input of arbitrary size to an output of fixed size. 
For simplicity of presentation, we assume the above cryptographic schemes are ideal and a trusted dealer equips replicas with these cryptographic materials. The dealer assumption can be lifted using the protocol of~\cite{kokoris2020asynchronous}.

Any deterministic agreement protocol cannot tolerate even a single fault under asynchrony due to FLP~\cite{fischer1985impossibility}.
Our asynchronous fallback protocol generates distributed randomness following the approach of~\cite{cachin2005random}: the generated randomness of any given view is the hash of the unique threshold signature on the view number.

\paragraph{BFT SMR.}
A Byzantine fault-tolerant state machine replication protocol~\cite{castro1999practical} commits client transactions as a log akin to a single non-faulty server, and provides the following two guarantees:
\begin{itemize}[leftmargin=*,noitemsep,topsep=0pt]
    \item Safety. Honest replicas do not commit different transactions at the same log position. 
    \item Liveness. Each client transaction is eventually committed by all honest replicas.
\end{itemize}


\smallskip
Besides these two requirements, a validated BFT SMR protocol must also satisfy {\em external validity}~\cite{cachin2001secure}, requiring all committed transactions to be {\em externally valid}, i.e. satisfying some application-dependent predicate.
This can be accomplished by adding validity checks on the transactions before the replicas propose or vote, and for brevity, we omit the details and focus on the BFT SMR formulation defined above.
We assume that each client transaction will be repeatedly proposed by honest replicas until it is committed\footnote{For example, clients can send their transactions to all replicas, and the leader can propose transactions that are not yet included in the blockchain, in the order that they are submitted. With rotating leaders of HotStuff/DiemBFT and random leader election of the asynchronous fallback, the assumption can be guaranteed.}.
For most of the paper, we omit the client from the discussion and focus on replicas.

SMR protocols usually implement many instances of single-shot Byzantine agreement, but there are various approaches for ordering.
We focus on the chaining approach, used in HotStuff~\cite{yin2019hotstuff} and DiemBFT~\cite{baudet2019state}, in which each proposal references the previous one and each commit commits the entire prefix of the chain.

\begin{figure*}[t!]
    \centering
    \input{sections/diembft-code}
    \caption{DiemBFT in our terminology.}
    \label{fig:diembft}
\end{figure*}

\subsection{Terminology}
We present the terminology used throughout the paper.
For brevity, we omit explicit format checks in the protocol description. The honest replica will discard messages that are not correctly formatted according to the protocol specification. 
\begin{itemize}[leftmargin=*,noitemsep,topsep=0pt]
    \item {\em Round Number and View Number.} 
      The protocol proceeds in rounds and views and each replica keeps track of the current round number $r_{cur}$ and view number $v_{cur}$, which are initially set to $0$.
      Each view can have several rounds and it is incremented by $1$ after each asynchronous fallback.
      Each round $r$ has a designated leader $L_r$ that proposes a new block (defined below) of transactions in round $r$.
    Since DiemBFT and \sync  (\Cref{sec:jolteon}) do not have an asynchronous fallback, the view number is always $0$, but it is convenient to define it here so that our \async protocol (\Cref{sec:ditto}) uses the same terminology. 
    
    \item {\em Block Format.}
      A block is formatted as $B=\block{id, qc, r, v, txn}$ 
      where $qc$ is the quorum certificate (defined below) of $B$'s parent block in the chain, $r$ is the round number of $B$, $v$ is the view number of $B$, $txn$ is a batch of new transactions, and $id=\hash(qc,r,v,txn)$ is the unique hash digest of $qc,r,v,txn$.
      Note that when describing the protocol, it suffices to specify $qc$, $r$, and $v$ for a new block,
        since $txn,id$ follow the definitions.
    We will use $B.x$ to denote the element $x$ of $B$.
    
    \item {\em Quorum Certificate.}
    A \emph{quorum certificate (QC)} of some block $B$ is a threshold signature of a message that includes $B.id, B.r, B.v$, produced by combining the signature shares $\thsig{B.id, B.r, B.v}$ from a quorum of $n-f=2f+1$ replicas. 
    We say a block is \emph{certified} if there exists a QC for the block.
    Blocks are chained by QCs to form a blockchain, or block-tree if there are forks
    The round and view numbers of $QC$ for block $B$ are denoted by $QC.r$ and $QC.v$, which equals $B.r$ and $B.v$, respectively.
    A QC or a block of view number $v$ and round number $r$ has rank $\rank=(v, r)$, and QCs or blocks are compared lexicographically by their rank (i.e. first by the view number, then by the round number).
    The function $\max(\rank_1,\rank_2)$ returns the higher rank between $\rank_1$ and $\rank_2$, and
    $\max(qc_1,qc_2)$ returns the higher (ranked) QC between $qc_1$ and $qc_2$.
    In DiemBFT and \sync  (\Cref{sec:jolteon}) the view number is always 0, so QCs or blocks are compared by their round numbers; while in \async (\cref{sec:ditto}) they are compared by their rank.
    We use $\hqc$ to denote the highest quorum certificate.
    
    \item {\em Timeout Certificate.}
    A timeout message of round $r$ by a replica contains a signed $(r, \hqc)$ message, and a $TC$ of round $r-1$ if $\hqc.round\neq r-1$ (otherwise $TC=\bot$), i.e., $[\sig{r,\hqc}_i, TC]$.
    A timeout certificate (TC) is formed by a quorum of $n-f=2f+1$ timeout messages, containing $2f+1$ signed (round, highest QC) messages from these timeout messages.
    A valid TC of round $r$ should only contain signed $(r, *)$, and only contain highest QCs with round numbers $<r$, and this will be checked implicitly when a replica receives a TC.
\end{itemize}

\paragraph{Performance metrics.}
We consider message complexity and communication complexity per round and consensus decision. 
For the theoretical analysis of the latency, we consider \latency, i.e., the number of rounds for all honest replicas to commit a block since it is proposed.
For the empirical analysis, we measure the \etelatency, i.e., the time to commit a transaction since it is sent by a client.

\paragraph{Description of DiemBFT.}
The DiemBFT protocol (also known as LibraBFT)~\cite{baudet2019state} 
is a production version of HotStuff~\cite{yin2019hotstuff} with a synchronizer implementation (Pacemaker).
For better readability, Figure~\ref{fig:diembft} presents the DiemBFT protocol in our terminology, which we will refer to as HotStuff or DiemBFT interchangeably throughout the paper
There are two components of DiemBFT, a {\em Steady State protocol} that makes progress when the round leader is honest, and a {\em Pacemaker protocol} that advances round numbers either due to the lack of progress
or due to the current round being completed.
The leader $L_r$, upon entering round $r$, proposes a block $B$ that extends a block certified by the highest QC it knows about, $\hqc$. 
When receiving the first valid round-$r$ block from $L_r$, any replica tries to advance its current round number, update its highest locked
round and its highest QC, and checks if any block can be committed.
A block can be committed if it is the first block among $3$ adjacent certified blocks with consecutive round numbers. 
After the above steps, the replica votes for $B$ by sending a threshold signature share to the next leader $L_{r+1}$, if the voting rules are satisfied.
Then, when the next leader $L_{r+1}$ receives $2f+1$ such votes, it forms a QC of round $r$, enters round $r+1$, proposes the block for that round, and the above process is repeated.
When the timer of some round $r$ expires, the replica stops voting for that round and multicast a timeout message containing a threshold signature share for $r$ and its highest QC.
When any replica receives $2f+1$ such timeout messages, it forms a TC of round $r$, enters round $r+1$ and sends the TC to the (next) leader $L_{r+1}$.
When any replica receives a timeout or a TC, it tries to advance its current round number given the high-QCs (in the timeout or TC) or the TC, updates its highest locked round and its highest QC given the high-QCs, and checks if any block can be committed.
For space limitation, we omit the correctness proof of the DiemBFT protocol, which can be found in~\cite{baudet2019state}.


\section{\sync Design} \label{sec:jolteon}

\begin{figure*}[t!]
    \centering
    \input{sections/jolteon-code}
    \caption{\sync.}
    \label{fig:jolteon}
\end{figure*}

\begin{table*}[h]
\centering
\begin{tabular}{lcccc}
\toprule

& {\bf Latency} & {\bf steady state communication} &  {\bf view-change communication} &  {\bf view synchronization} \\
\midrule

DiemBFT & 7 messages & linear & linear &  quadratic \\
\sync & 5 messages & linear & quadratic & quadratic \\

\bottomrule
\end{tabular}
\caption{Theoretical comparison between DiemBFT and \sync.}
\label{table:DvJ}
\end{table*}

In this section, we describe how we turn DiemBFT into \sync\ -- a 2-chain version of DiemBFT.
The pseudocode is given in Figure~\ref{fig:jolteon}.
As mentioned previously, the quadratic cost of view-synchronization in leader-based consensus protocols, due to faulty leaders or asynchronous periods, is inherent.
While the linearity of HotStuff's view-change is a theoretical milestone, its practical importance is limited by this anyway quadratic cost of synchronization after bad views.

With this insight in mind, \sync uses a quadratic view-change protocol that allows a linear 2-chain commit rule in the steady state.
The idea is inspired by PBFT~\cite{castro1999practical} with each leader proving the safety of its proposal.
In the steady state each block extends the block from the previous round and providing the QC of the parent is enough to prove safety, hence the steady state protocol remains linear.
However, after a bad
round caused by asynchrony or a bad leader, proving the safety of extending an older QC requires the leader to prove that nothing more recent than the block of that QC is committed. 
To prove this, the leader uses the TC formed for view-changing the bad round. 
Recall that a TC for round $r$ contains $2f+1$ validators' $qc_{high}$ sent in timeout messages for round $r$. 
The leader attaches the TC to its proposal in round $r+1$ and extends the highest QC among the QCs in the TC (see the \textbf{Propose} rule in Figure~\ref{fig:jolteon}).

When a validator gets a proposal $B$, it first tries to advance its round number, then updates its $qc_{high}$ with $B.qc$ and checks the 2-chain commit rule for a possible commit.
Then, before voting,
it verifies that at least one of the following two conditions is satisfied:
\begin{itemize}[leftmargin=*,noitemsep,topsep=0pt]
    \item $B.r = B.qc.r+1$ or;
    \item $B.r=B.tc.r+1$ and\\ $B.qc.r\geq \max\{qc_{high}.r~|~ qc_{high}\in B.tc\}$
\end{itemize}
In other words, either $B$ contains the QC for the block of the previous round; or it contains at least the highest QC among the $2f+1$ QCs in the attached TC, which was formed to view-change the previous round.

\paragraph{Safety intuition.}
If the first condition is satisfied then $B$ directly extends the block from the previous round. 
Since at most one QC can be formed in a round, this means that no forks are possible, and voting for $B$ is safe.

The second condition is more subtle.
Note that by the 2-chain commit rule, if a block $B'$ is committed, then there exists a certified block $B''$ s.t. $B'. round + 1 = B''.round$.
That is, at least $f+1$ honest replicas vote to form the QC for $B''$ and thus set their $qc_{high}$ to be the QC for block $B'$ ($qc_{B'}$).
By quorum intersection and since replicas never decrease their $qc_{high}$, any future (higher round) TC contains a $qc_{high}$ that is at least as high as $qc_{B'}$. 
The second condition then guarantees that honest replicas only vote for proposals that extend the committed block $B'$.
Full proof can be found in Appendix~\ref{sec:jolteon-proof}.

\subsection{DiemBFT vs \sync.}

\paragraph{Efficiency.}
Table~\ref{table:DvJ} compares the efficiency of DiemBFT and \sync from a theoretical point of view.
Both protocols have linear communication complexity per round and per decision {\em under synchrony and honest leaders}, due to the leader-to-all communication pattern and the threshold signature scheme\footnote{The implementation of DiemBFT does not use threshold signatures, but for the theoretical comparison here we consider a version of DiemBFT that does.}.
The complexity of the Pacemaker to synchronize views (view synchronization in Table~\ref{table:DvJ}), for both protocols, under asynchrony or failures is quadratic due to the all-to-all timeout messages.
The complexity of proposing a block after a bad round that requires synchronization (view-change communication in Table~\ref{table:DvJ}) is linear for DiemBFT and quadratic for \sync.
This is because in DiemBFT such a proposal only includes $qc_{high}$, whereas in \sync it includes a TC containing $2f+1$ $qc_{high}$.
The \latency \emph{under synchrony and honest leaders} is $7 \Delta$ and $5 \Delta$ for DiemBFT and \sync, respectively, due to the $3$-chain ($2$-chain for \sync) commit rules. Each round in the commit chain requires two rounds trip times (upper bounded by $\Delta$), plus the new leader multicast the last QC of the chain that allows all honest replicas to learn about the chain and commit the block.

\paragraph{Limitations.}
During periods of asynchrony, or when facing DDoS attacks on the leaders, both protocols have {\em no liveness guarantees} -- the leaders' blocks cannot be received on time.
As a result, replicas keep multicasting timeout messages and advancing round numbers without certifying or committing blocks.
This is unavoidable~\cite{spiegelman2020search}: communication complexity of any deterministic partially synchronous Byzantine agreement protocol is unbounded before GST, even in failure-free executions.

Fortunately, in the next section, we show that it is possible to boost the liveness guarantee of DiemBFT, \sync,
by replacing the view-synchronization mechanism (pacemaker) with a fallback protocol that guarantees progress even under asynchrony.
Furthermore, the asynchronous fallback can be efficient. 
The protocol we propose in the next section has quadratic communication cost for fallback, which is the cost DiemBFT and \sync
pay to synchronize views anyway.



\section{\async Design} \label{sec:ditto}
\label{sec:fallback}

To strengthen the liveness guarantees of existing partially synchronous BFT protocols such as DiemBFT~\cite{baudet2019state} and \sync, we propose the protocol \async. 
\async has linear communication cost for the synchronous path, quadratic cost for the asynchronous path, and preserves liveness robustly in asynchronous network conditions.
The \async protocol is presented in Figure~\ref{fig:ditto}.

Our \async protocol uses Multi-valued Validated Byzantine Agreement (MVBA) in a black-box way.

\paragraph{Multi-valued Validated Byzantine Agreement}
\label{sec:pre:mvba}
Multi-valued validated Byzantine agreement (MVBA)~\cite{cachin2001secure} is a Byzantine fault-tolerant agreement protocol where a set of protocol replicas each with an input value can agree on the same value satisfying a predefined external predicate  $f(v):\{0,1\}^{|v|}\rightarrow \{0, 1\}$ globally known to all the replicas.
An MVBA protocol with predicate $f(\cdot)$ should provide the following guarantees except for negligible probability. 
\begin{itemize}[itemsep=0pt,topsep=4pt,leftmargin=*]
    \item \textbf{Agreement}: All honest replicas output the same value.
    \item \textbf{External Validity}: If an honest replica outputs $v$, then $v$ must be externally valid, i.e., $f(v)=1$.
    \item \textbf{Termination}: If all honest replicas input an externally valid value, all honest replicas eventually output.
\end{itemize}

Our protocol uses MVBA with the state-aware predicate~\cite{yurek2023long}, where the predicate $f(v, e)$ can have an additional state-dependent variable $e$ chosen by the replica when invoking the MVBA. With the state-aware predicate, a value $v$ is externally valid to an honest replica, if and only if $f(v, e)=1$ where $e$ is the state-dependent variable chosen by the honest replica when invoking the MVBA. The guarantees of MVBA with the state-aware predicate become the following.

\begin{itemize}[itemsep=0pt,topsep=4pt,leftmargin=*]
    \item \textbf{Agreement}: All honest replicas output the same value.
    \item \textbf{External Validity}: If an honest replica outputs $v$, then $v$ must be externally valid to at least one honest replica.
    \item \textbf{Termination}: If all honest replicas input a value that is externally valid to all honest replicas, all honest replicas eventually output.
\end{itemize}

Our protocol \async can directly use existing MVBA protocols in a black-box manner, by plugging in the state-aware predicate as defined in Figure~\ref{fig:ditto}. 
We refer the reader to~\cite{yurek2023long} for the argument as to why the agreement, termination, and external validity properties of MVBA still hold.

\paragraph{Protocol intuition.}
Our solution consists of a steady-state protocol, which is similar to that of \sync, and an asynchronous fallback protocol, which replaces the view-change of \sync.
The idea behind our fallback protocol is that, after entering the fallback, all replicas will invoke MVBA to commit a new block extending the committed blocks. Then, replicas will try the steady state again from the block committed by MVBA. 

Since this protocol has two paths, a synchronous fast path and an asynchronous fallback path, it is critical to ensure safety and liveness when the protocol transfers from one path to another. 
On a high level, our protocol ensures safety by always following the $1$-chain lock and $2$-chain commit rule from \sync, and by leveraging the agreement guarantee of MVBA. 
As for liveness, our protocol guarantees that either the sync path (same as \sync) makes progress, or enough replicas timeout the synchronous path and enter the asynchronous fallback.

\begin{figure*}[t]
    \centering
    
    \begin{mybox}
    In addition to the variables in Figure~\ref{fig:jolteon},
    each replica also keeps a boolean value $\mode$, initialized as $false$, to specify whether the replica is in a fallback.
    \vspace{2mm}

    \textbf{Steady State Protocol for Replica $i$}

    \begin{itemize}[leftmargin=*,noitemsep,topsep=0pt]
        \item\label{smr:propose} {\bf Propose.} 
        Upon entering round $r$, the leader $L_r$ multicasts a block {$B=\block{id, \hqc, r, v_{cur}, txn}$}.
        
        \item\label{smr:vote} {\bf Vote.} 
        Upon receiving the first proposal {$B=\block{id, qc, r,v, txn}$} from $L_{r}$.
        If $r=r_{cur}$, $v=v_{cur}$, $r>r_{vote}$ and {$r=qc.r+1$},
        vote for $B$ by sending the threshold signature share $\thsig{ id, r, v }_i$ to $L_{r+1}$, and update $r_{vote}\gets r$. 

        \item {\bf New QC.} Upon receiving a $qc$, replica $i$ performs the following steps.
        \begin{itemize}[leftmargin=*,noitemsep,topsep=0pt]
            \item {\bf Advance Round.} Replica $i$ updates its current round $r_{cur} \gets \max(r_{cur}, qc.r+1)$. 
            \item {\bf Lock.} ($1$-chain lock rule) Replica $i$ updates $\hqc\gets \max(\hqc,qc)$.
            \item {\bf Commit.} ($2$-chain commit rule) If there exists two adjacent certified blocks $B,B'$ {with the same view number}, replica $i$ commits $B$ and all its ancestors.
        \end{itemize}
        
    

        \item\label{async:timeout} {\bf Timeout.}
        Upon entering a new round, replica $i$ resets its timer to \timeoutvalue.
        
        When the timer expires, replica $i$ stops the timer as well as proposing or voting for any block of view $\leq v_{cur}$, and multicasts a \timeout message $\sig{ \thsig{ v_{cur} }_i, qc_{high} }_i$ where  $\thsig{ v_{cur} }_i$ is a threshold signature share.
    \end{itemize}
    \vspace{2mm}

    \textbf{Fallback Protocol for Replica $i$}

    Replica $i$ executes the following in event-driven manner.
    \begin{itemize}[leftmargin=*,noitemsep,topsep=0pt]

        \item\label{async:enter} {\bf Enter Fallback.}
        Upon receiving $2f+1$ \timeout messages of same view $v\geq v_{cur}$,
        replica $i$ updates
        $\mode\gets true$, 
        $v_{cur}\gets v+1$,
        and multicasts a message $(\texttt{proof}, v_{cur}, \hqc)$.

        \item\label{async:ack} {\bf Ack.} When $\mode=true$,
        upon receiving the first message $(\texttt{proof}, v, qc)$ from replica $j$ where $v=v_{cur}$,
        \begin{itemize}[noitemsep,topsep=0pt]
            \item if $qc.rank \geq \hqc.rank$, replica $i$ sends $\{\texttt{proof}, v_{cur}, qc\}_i$ back to replica $j$;

            \item if $qc.rank < \hqc.rank$, replica $i$ sends $(\texttt{higherQC}, v_{cur}, \hqc)$ back to replica $j$;
        \end{itemize}
        
        \item\label{async:enter} {\bf MVBA.} When $\mode=true$, and replica has not invoked $\mvba_{v_{cur}}$ yet,
        \begin{itemize}[noitemsep,topsep=0pt]
            \item upon aggregating a threshold signature  $\sigma$ on message $(\texttt{proof}, v_{cur}, \hqc)$, or
            \item upon receiving $(\texttt{higherQC}, v_{cur}, qc)$ where $qc.rank>\hqc.rank$, replica $i$ updates $\hqc\gets qc$ and sets $\sigma\gets\bot$,
        \end{itemize}
        replica $i$ generates block $B=\block{id, \hqc, \hqc.r+1,v_{cur}, txn}$ and invokes $\mvba_{v_{cur}}(B, \sigma)$ with predicate $f((B', \sigma'), \hqc)$ defined below, where $(B', \sigma')$ is the input of a replica.

        \item\label{async:mvba} {\bf Exit Fallback.}
        Upon $\mvba_v$ decides $B$ where $v\geq v_{cur}$, replica $i$ commits $B$ and all its ancestors, multicasts the threshold signature share $\thsig{ B.id, B.r, B.v }_i$, and updates $v_{cur}\gets v, r_{vote}\gets B.r, \mode\gets false$.
        Replica $i$ then waits until $B$ gets certified.

        
        \end{itemize}
    \vspace{2mm}
    
        Predicate $f((B', \sigma'), \hqc)$ of $\mvba_{v}$ (as replica $i$ with $\hqc$ in view $v$ when invoking $\mvba$) 
        \begin{itemize}[noitemsep,topsep=0pt]
            \item 
            returns $1$, if $B'.v=v$, $B'.qc.v=v-1$, $B'.r=B'.qc.r+1$ and  either (1) $\sigma'$ is a valid threshold signature on the message $(\texttt{proof}, B'.v, B'.qc)$ or (2) $\sigma'=\bot$ and $B'.qc.rank\geq \hqc.rank$; 
            \item 
            returns $0$ otherwise.
        \end{itemize}
        
    \end{mybox}

    \caption{\async}
    \vspace{-1em}
    \label{fig:ditto}
\end{figure*}

\paragraph{Description of Steady State.}
All the steps are described in an event-driven manner, i.e., a step is executed whenever a condition is satisfied during the protocol execution.
The protocol is similar to \sync with the following main differences.
The blocks do not contain TC anymore.
Each replica additionally keeps a boolean value $\mode$ to record if it is in the fallback. 
When the replica reaches timeout for a round in view $v$, it stops the timer and stops proposing or voting for any block of view $\leq v$.
The $1$-chain lock rule and $2$-chain commit rule still apply, but the two blocks in the $2$-chain need to have the same view number.

\paragraph{Description of Fallback.}
Now we give a brief description of the Fallback protocol, which replaces the Pacemaker protocol in the \sync protocol (Figure~\ref{fig:jolteon}).

Just like in \sync, when the timer expires, the replica tries to initiate the fallback (the equivalent of view-change) by broadcasting a timeout message containing the highest QC and a signature share of the current view number.
When receiving $2f+1$ timeout messages of the same view $v$, the replica enters the fallback, updates its current view number, and multicasts a \texttt{proof} message attaching the view number and its high-QC. On receiving a \texttt{proof} message with $qc$, the replica threshold-signs the message if $qc$ is no lower than its $\hqc$, otherwise sends back its $\hqc$ with a higher rank.
Then, the replica that sent \texttt{proof}, waits for either \texttt{proof} getting threshold-signed by $2f+1$ replicas, or receiving a higher-ranked QC.
In the former case, the replica inputs a block extending its $\hqc$ together with the threshold signature as the proof to the \mvba instance of the current view.
In the latter case, the replica inputs a block extending the received higher-ranked QC together with empty proof to the \mvba instance.
In both cases, the replica invokes \mvba with the state-aware predicate which also has its $\hqc$ as input.

When \mvba decides a block, the replica commits the block and its ancestors, multicasts a vote message to get the block certified, and exits the fallback. 
The replica waits until the block gets certified before executing other steps of the protocol.

\paragraph{Correctness of \async.}
The proof of safety and liveness for \async can be found in Appendix~\ref{sec:ditto-proof}.

\begin{theorem}[Efficiency]
    During the periods of synchrony with honest leaders, the amortized communication complexity per block decision is $O(n)$, and the \latency is $5$ rounds.
    During periods of asynchrony, the expected communication complexity per block decision is $O(n^2)$, and the expected \latency is $R+4$ rounds.
\end{theorem}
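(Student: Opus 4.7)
The plan is to handle the synchronous and asynchronous regimes separately, leveraging the per-round message patterns of Figures~\ref{fig:steady-state} and~\ref{fig:fallback-protocol}.

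For the synchronous, honest-leader case, I would reuse essentially the same analysis as for \sync. On each round the leader multicasts one proposal and each replica returns a single vote to the next leader, giving $O(n)$ constant-size messages per round (threshold signatures keep sizes constant). For the $5$-round latency, I would trace a block $B$ proposed in round $r$ by an honest leader through four subsequent message delays---votes to $L_{r+1}$, proposal of $B'$ by $L_{r+1}$, votes to $L_{r+2}$, and the round-$(r+4)$ proposal of $B''$ carrying $qc_{B'}$---at which point every honest replica holds both $qc_B$ and $qc_{B'}$ at consecutive rounds in view $0$ and applies the $2$-chain commit rule of \async to commit $B$.

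For the asynchronous path I would analyze a single fallback in isolation. Communication-wise, every phase of the fallback---all-to-all timeouts, $n$ concurrent height-$1$ proposals with their vote replies, $n$ height-$2$ proposals with their vote replies, the multicast of height-$2$ f-QCs, the coin shares, and the coin-QC---contributes at most $O(n^2)$ constant-size messages, summing to $O(n^2)$ per fallback. Latency-wise, I would count $7$ message-delay steps from the timeout multicast to the moment a replica locally assembles a coin-QC from $f+1$ shares, after which it exits the fallback and applies the commit rule. The pivotal step is then to show that one fallback commits with constant probability. Here I would invoke the boosting argument built into the \emph{Fallback Vote} and \emph{Fallback Propose} rules: every honest replica votes on any well-formed f-block and proposes its own height-$2$ f-block upon observing the first certified height-$1$ f-chain, so all $2f+1$ honest replicas end up with certified height-$2$ fallback chains. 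Because the coin-QC is a unique threshold signature on the view number, the elected leader is effectively uniform over $[n]$, so with probability at least $(2f+1)/n \geq 2/3$ it is honest and its chain is complete; when this occurs, its two endorsed f-blocks lie in the same view at consecutive rounds, and the $2$-chain commit rule fires on the height-$1$ f-block.

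Combining the two bounds yields an expected number of fallbacks of at most $3/2$, hence amortized communication $O(n^2)$ and expected latency $\tfrac{3}{2}\cdot 7 = 10.5$ rounds. The main obstacle is the probability argument: it must be justified that the coin-QC outcome is independent of which f-chains are already complete, which relies on the unpredictability of the threshold signature before $f+1$ honest shares are released, combined with the rule that honest replicas release coin shares only after collecting $2f+1$ height-$2$ f-QCs. A secondary subtlety is that a block committed through a fallback must not conflict with earlier commits---this is the safety guarantee deferred to Appendix~\ref{sec:proof}, and relies on endorsed f-QCs outranking regular QCs of the same view.
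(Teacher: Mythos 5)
Your proposal is correct and follows essentially the same argument as the paper: linear per-round message counting plus a $2\times 2+1=5$ message-delay trace for the synchronous path, and a $7$-round, $O(n^2)$ fallback that commits with probability at least $2/3$ (hence $1.5$ expected repetitions, giving $10.5$ rounds) for the asynchronous path, with your remarks on coin unpredictability and the deferred safety argument matching Lemma~\ref{lem:6}. The only blemish is the label ``round-$(r+4)$'' for the proposal carrying $qc_{B'}$, which should be round $r+2$; the five-message-delay count itself is right.
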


\begin{proof}
    When the network is synchronous and leaders are honest, no honest replica will multicast timeout messages. In every round, the designated leader multicast its proposal of size $O(1)$ (due to the use of threshold signatures for QC), and all honest replicas send the vote of size $O(1)$ to the next leader.
    Hence the communication cost is $O(n)$ per round and per block decision.
    For the block latency, since \sync adopts $2$-chain commit and need one more round for all replicas to receive the 2-chain proof, the latency is $2\times 2+1=5$ rounds.
    
    When the network is asynchronous and honest replicas enter the asynchronous fallback, each honest replica in the fallback only broadcast $O(1)$ number of messages, and each message has size $O(1)$. Hence, each instance of the asynchronous fallback has communication cost $O(n^2)$, and will commit a new block with probability $2/3$.
    Therefore, the expected communication complexity per block decision is $O(n^2)$.
    
    The latency to commit a block under fallback is $R+4$ rounds, consisting of $1$ round to exchange timeouts, $2$ rounds to multicast and ack \texttt{proof} message, $R$ rounds to finish \mvba, and $1$ round to get \mvba output certified.
    The latency can be further reduced to $R+3$ rounds by removing the round to certify the \mvba output, if the \mvba already certifies its output. 
\end{proof}

\section{Implementation and Evaluation}
\blue{
[Note to the reader: The implementation and evaluation of \vaba, and the asynchronous fallback path of \async are outdated.
The implementation and evaluation of \sync, and the synchronous path of \async are up to date. ]
}

\subsection{Implementation}
\label{sec:implementation}

We implement \sync and \async on top of a high-performance open-source implementation of HotStuff\footnote{
\url{https://github.com/asonnino/hotstuff/tree/3-chain}
}~\cite{yin2019hotstuff}. We selected this implementation because it implements a Pacemaker~\cite{yin2019hotstuff}, contrarily to the implementation used in the original HotStuff paper\footnote{\url{https://github.com/hot-stuff/libhotstuff}}.
Additionally, it provides well-documented benchmarking scripts to measure performance in various conditions, and it is close to a production system (it provides real networking, cryptography, and persistent storage). 
It is implemented in Rust, uses Tokio\footnote{\url{https://tokio.rs}} for asynchronous networking, ed25519-dalek\footnote{\url{https://github.com/dalek-cryptography/ed25519-dalek}} for elliptic curve based  signatures, and data-structures are persisted using RocksDB\footnote{\url{https://rocksdb.org}}. It uses TCP to achieve reliable point-to-point channels, necessary to correctly implement the distributed system abstractions.
We additionally use \texttt{threshold\_crypto}\footnote{\url{https://docs.rs/threshold_crypto/0.4.0/threshold_crypto/}} to implement random coins, Our implementations are between 5,000 and 7,000 LOC, and a further 2,000 LOC of unit tests. 
We are open sourcing our implementations of \sync
\footnote{
\ifdefined\cameraReady
\url{https://github.com/asonnino/hotstuff}
\else
Link omitted for blind review.
\fi
}
, and
\async and \vaba
\footnote{
\ifdefined\cameraReady
\url{https://github.com/danielxiangzl/hotstuff}
\else
Link omitted for blind review.
\fi
}.
We are also open sourcing all AWS orchestration scripts, benchmarking scripts, and measurements data to enable reproducible results
\footnote{
\ifdefined\cameraReady
\url{https://github.com/asonnino/hotstuff/tree/main/benchmark}, \url{https://github.com/danielxiangzl/hotstuff/tree/main/benchmark}
\else
Link omitted for blind review.
\fi
}.

\paragraph{\async with exponential backoff:}
From the protocol design of \vaba, we know that \vaba is exactly the asynchronous fallback of our \async with the timeout threshold \timeoutvalue set to be $0$. Therefore, under asynchrony or leader attacks, the performance of \vaba would be better than \async, as \vaba immediately proceeds to the next view without waiting for the timer to expire.
To improve the latency performance of \async under long periods of asynchrony or leader attacks, we adopt an exponential backoff mechanism for the asynchronous fallback as follows.
We say a replica executes the asynchronous fallback consecutively $x$ times if it only waits for the timer to expire for the first fallback, and skips waiting for the timer and immediately sends timeout for the rest $x-1$ fallbacks. 
Initially, replicas only execute asynchronous fallback consecutively $x=1$ time.
However, if a replica, within the timeout, does not receive from the steady state round-leader immediately after the fallback, it will multiply $x$ by a constant factor (5 in our experiments); otherwise, the replica resets $x=1$.
Therefore, during long periods of asynchrony or leader attacks, the number of consecutively executed fallbacks would be exponentially increasing ($1,5,25,...$); while during periods of synchrony and honest leaders, the number of consecutively executed fallbacks is always 1.

\subsection{Evaluation} 
\label{sec:evaluation}
We evaluate the throughput and latency of our implementations through experiments on Amazon Web Services (AWS). 
We particularly aim to demonstrate \first that \sync achieves the theoretically lower \latency than \threechain under no contention and \second that the theoretically larger message size during view-change does not impose a heavier burden, making \sync no slower than \threechain under faults (when the view-change happens frequently). Additionally we aim to show that \async adapts to the network condition, meaning that \third it behaves similarly to \sync when the network is synchronous (with and without faults) and \fourth close to our faster version of VABA (2-chain) when the adversary adaptively compromises the leader.

We deploy a testbed on Amazon Web Services, using
m5.8xlarge instances across 5 different AWS regions: N.
Virginia (us-east-1), N. California (us-west-1), Sydney (apsoutheast-2), Stockholm (eu-north-1), and Tokyo (ap-northeast1). They provide 10Gbps of bandwidth, 32 virtual CPUs (16
physical core) on a 2.5GHz, Intel Xeon Platinum 8175, and
128GB memory and run Linux Ubuntu server 20.04.

We measure throughput and \etelatency as the performance metrics.
Throughput is computed as the average number of committed transactions per second,
and \etelatency measures the average time to commit a transaction from the moment it is submitted by the client.
Compared with the \latency in our theoretical analysis, \etelatency also includes the queuing delay of the transaction when the clients' input rate is high which helps identify the capacity limit of our system.

\begin{figure}[t]
\centering
\includegraphics[width=0.48\textwidth]{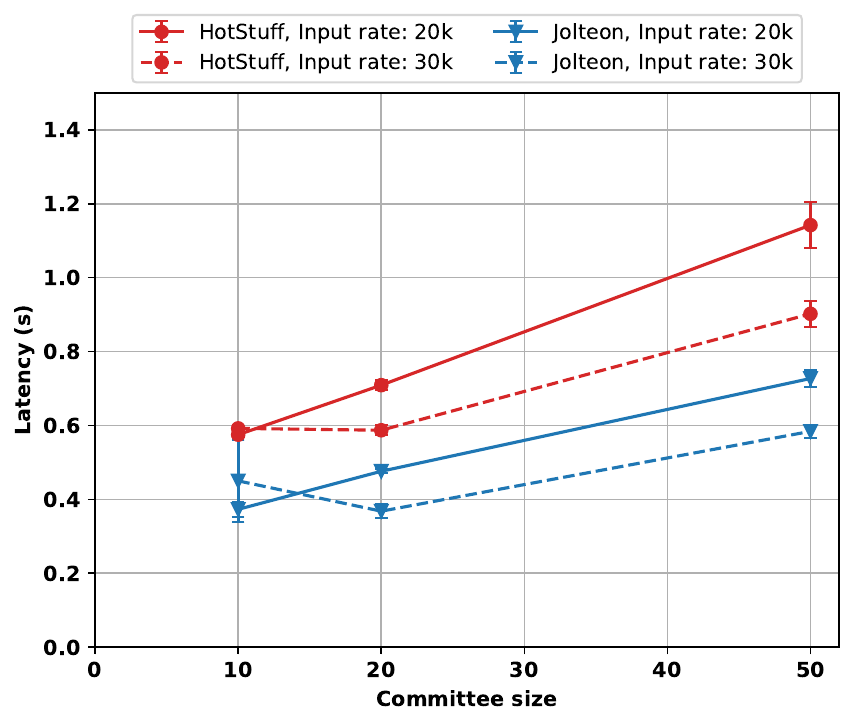}
\caption{ 
Comparative \latency for \threechain (HotStuff) and \sync. WAN measurements with 10, 20, or 50 replicas. No replica faults, 500KB mempool batch size and 512B transaction size.
}
\label{fig:happy-path-commit}
\end{figure}

\begin{figure*}[t]
\centering
\includegraphics[width=\textwidth]{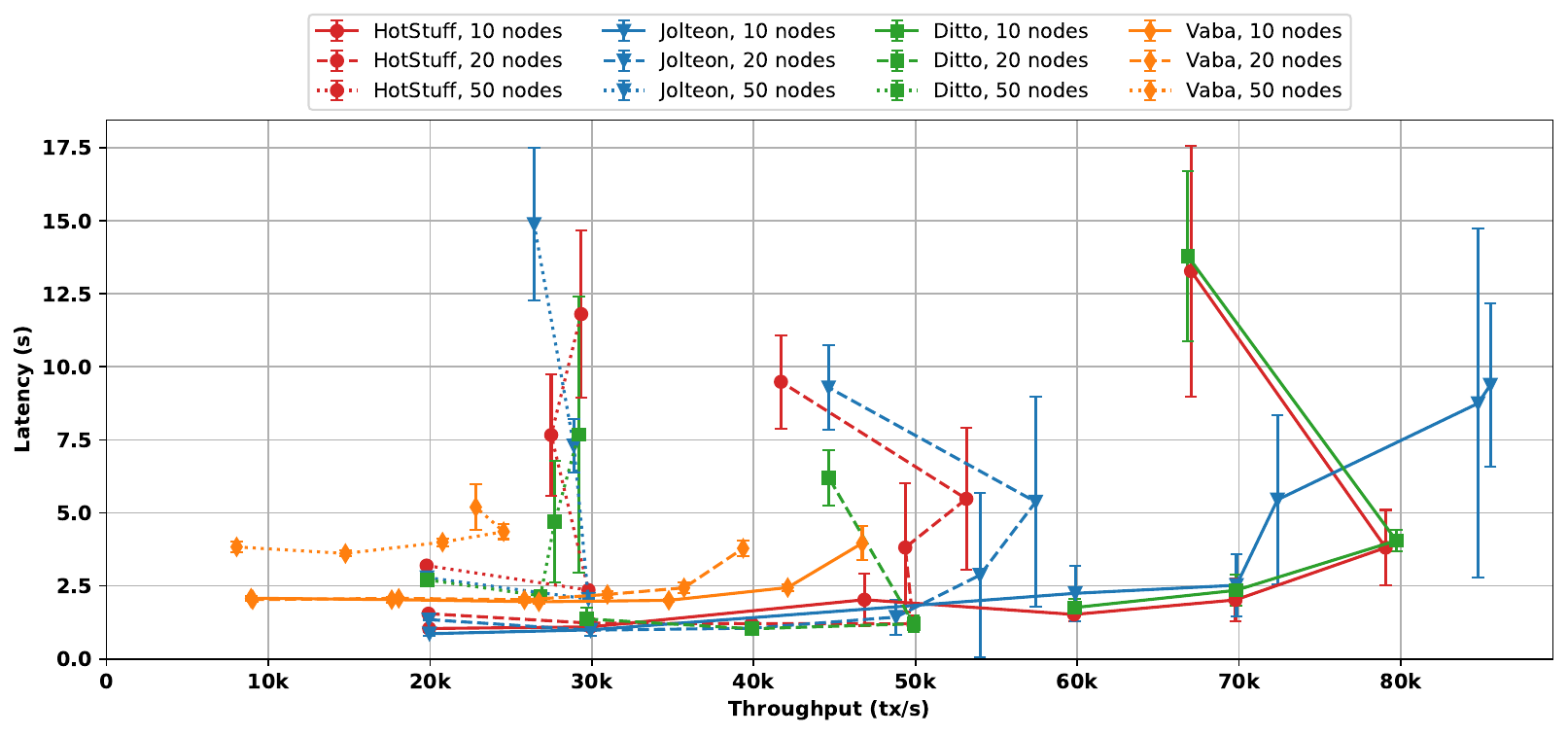}
\caption{ 
Comparative throughput-latency performance for \threechain (HotStuff), \sync, \async, and \vaba WAN measurements with 10, 20, or 50 replicas. No replica faults, 500KB mempool batch size and 512B transaction size.
}
\label{fig:happy-path}
\end{figure*}

\begin{figure*}[t]
\centering
\includegraphics[width=\textwidth]{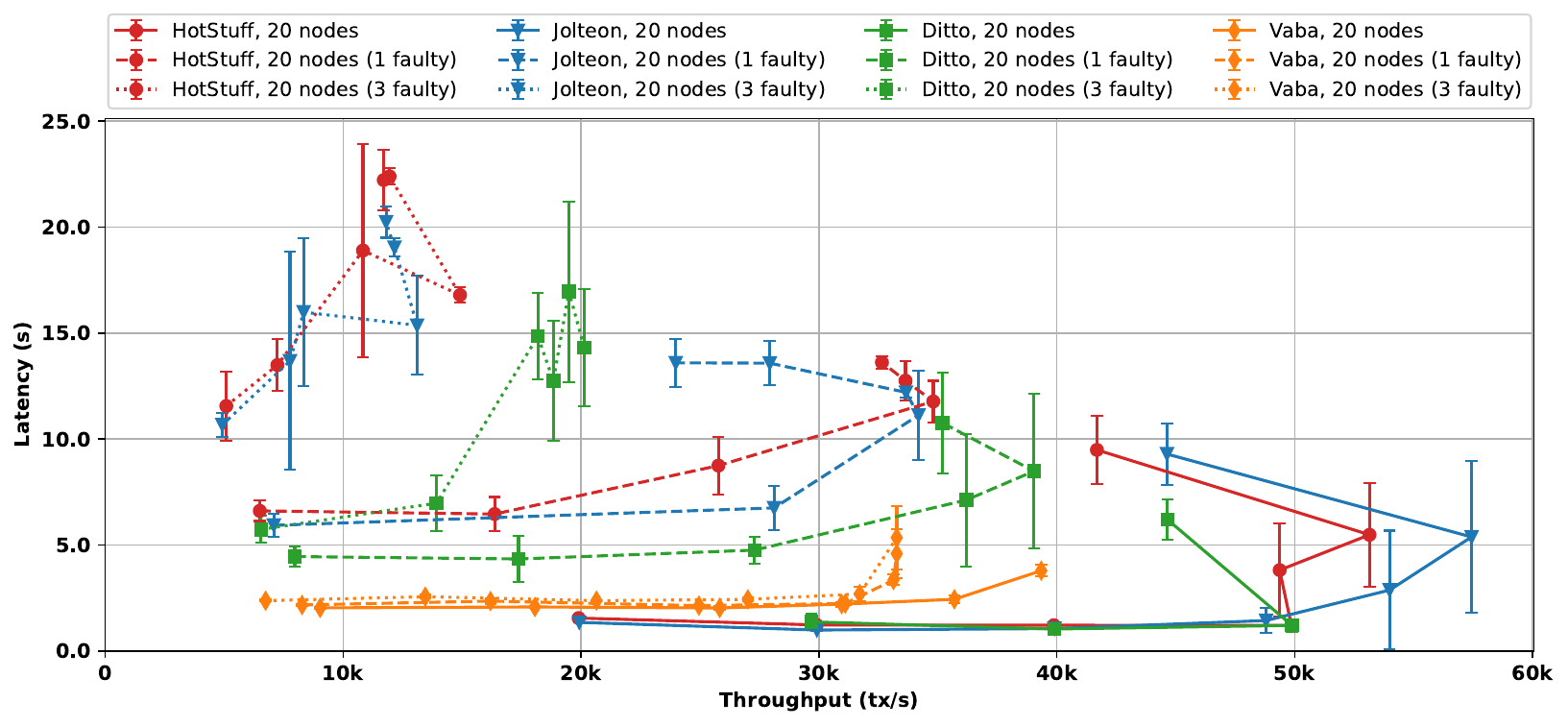}
\caption{ 
Comparative throughput-latency performance for \threechain (HotStuff), \sync, \async, and \vaba WAN measurements with 20 replicas. 0, 1, and 3 faults, 500KB mempool batch size and 512B transaction size.
}
\label{fig:dead-nodes}
\end{figure*}



\begin{figure}[h]
\centering
\includegraphics[width=0.48\textwidth]{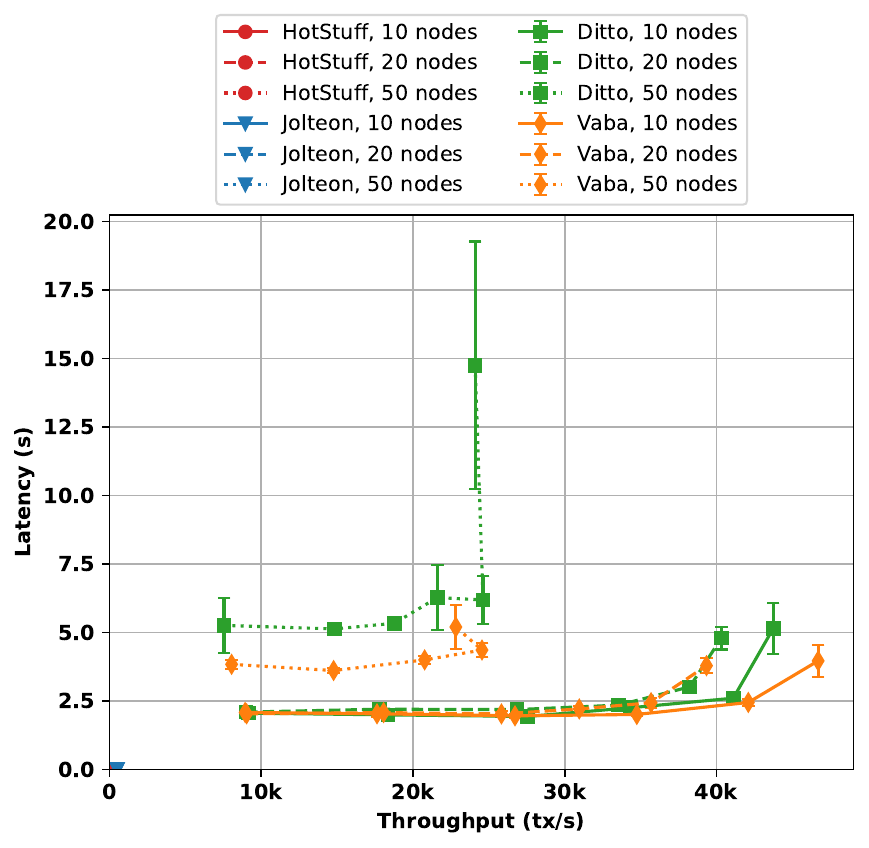}
\caption{ 
Comparative throughput-latency performance for \threechain (HotStuff), \sync, \async, and \vaba. WAN measurements with 10, 20, or 50 replicas. No replica faults, 500KB mempool batch size, and 512B transaction size. Leader constantly under DoS attack.
} 
\label{fig:leader-under-dos}
\end{figure}

\begin{figure}[h]
\centering
\includegraphics[width=0.48\textwidth]{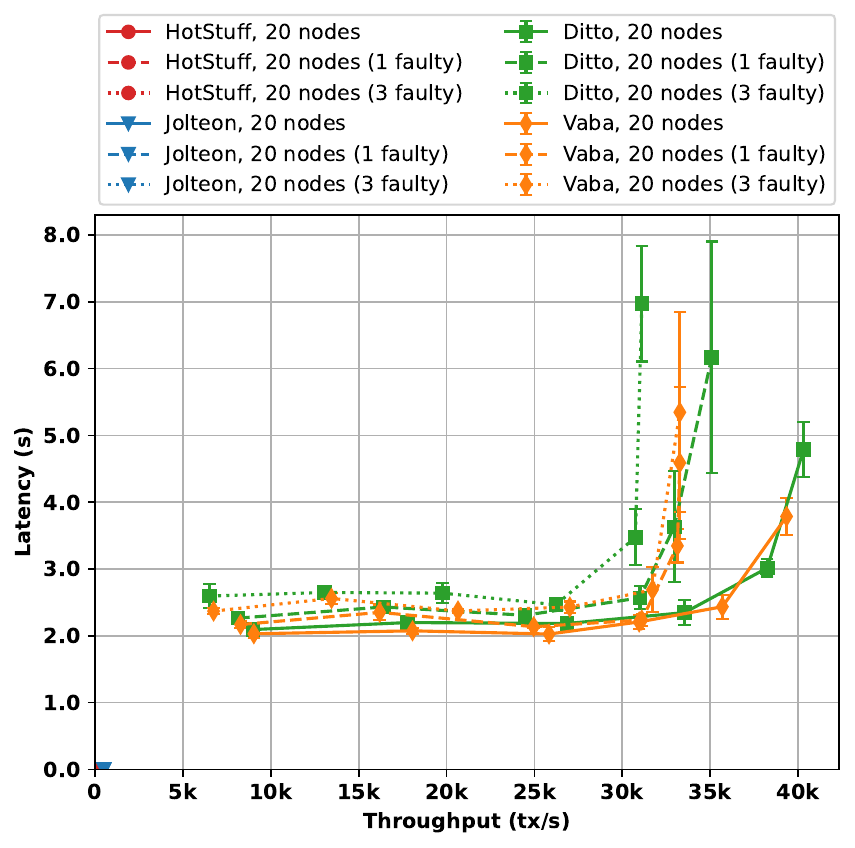}
\caption{ 
Comparative throughput-latency performance for \threechain (HotStuff), \sync, \async, and \vaba. WAN measurements with 20 replicas. 0, 1, and 3 faults, 500KB mempool batch size and 512B transaction size. Leader constantly under DoS attack.
}
\label{fig:dead-nodes-and-dos}
\end{figure}

In all our experiments, the transaction size is set to be $512$ bytes and the mempool batch size is set to be $500$KB. We deploy one benchmark client per node submitting transactions at a fixed rate for a duration of 5 minutes (to ensure we report steady state performance).
We set the timeout to be $5$ seconds for experiments with $10$ and $20$ nodes, and $10$ seconds for $50$ nodes, so that the timeout is large enough for not triggering the pacemaker of \sync and fallback of \async.
In the following sections, each measurement in the graphs
is the average of 3 runs, and the error bars represent one
standard deviation.

\subsection{Evaluation of \sync}
In this section, we compare \sync with our baseline \threechain implementation in two experiments.
First in Figure~\ref{fig:happy-path-commit} we run both protocol with a varying system size ($10$, $20$, $50$ nodes).  In order to remove any noise from the mempool, this graph does not show the \etelatency for clients but the time it takes for a block to be committed.  As the Figure illustrated \sync consistently outperforms \threechain by about $200-300$ms of latency which is around one round-trip across the world and both systems scale similarly. In Figure~\ref{fig:happy-path} this effect is less visible due to the noise of the mempool (\etelatency of around 2 secs), but \sync is still slightly faster than \threechain in most experiments.

Finally, in Figure~\ref{fig:dead-nodes} we run both protocols with 20 nodes and crashed 0, 1, or 3 nodes at the beginning of the experiment. This forces frequent view-changes due to the leader rotation that \threechain uses in order to provide fairness. This is ideal for our experiment since we can see the overall impact of the more costly view-change slowing down \sync. As we can see, \sync again outperforms \threechain in most settings due to the 2-chain commit enabling more frequent commits under faults. As a result, we can conclude that there is little reason to pay the extra round-trip of \threechain in order to have a theoretically linear view-change.
After sharing our findings with the Diem Team they are currently integrating an adaptation of \sync for their next mainnet release.

\subsection{Evaluation of \async}

\paragraph{Synchronous and fault-free executions.}
When all replicas are fault-free and the network is synchronous,
we compare the performances of the three protocol implementations in Figure~\ref{fig:happy-path}.
As we can observe from the figure, the synchronous path performance of \async is very close to that of \sync, when the quadratic asynchronous fallback of \async and the quadratic pacemaker of \sync is not triggered.
On the other hand, the performance of \vaba is worse than \sync and \async in this setting, due to its quadratic communication pattern -- instead of every replica receiving the block metadata and synchronizing the transaction payload with only one leader per round in \sync and \async, in VABA every replica will receive and synchronize with $O(n)$ leaders per round.

\paragraph{Crash faults.}
In this experiment, we run the three protocol implementations with $20$ nodes and crash $1$ or $3$ nodes from the beginning of the execution.
The throughput-latency results are summarized in Figure~\ref{fig:dead-nodes}.
As we can observe from the figure, 
\vaba has the most robust performance under faults, where the latency only slightly increases before the throughput exceeds $30k$ tps. 
The reason is that by design \vaba can make progress as long as at least $2f+1$ nodes are honest, and crashing some of the nodes only add latency to the views when these nodes are elected as the leader of the views.
The performance of \threechain and \sync are more fragile under faults, where the peak performance degrades to $30k$ tps with about $10$ seconds latency under $1$ fault and $10k$ tps with about $15$ seconds latency under $3$ faults. 
The reason is that whenever a round-$r$ leader is faulty, replicas need to wait for two timeouts, which is $10$sec since the timeout is set to be $5$sec, to enter round $r+1$ from round $r-1$. 
\async under faults performs better than \sync but worse than \vaba. Compared to \vaba, replicas in \async need to wait for a timeout of $5$sec to enter the asynchronous fallback whenever the leader is faulty; compared to \sync, replicas only need to wait for one timeout ($5$sec) to enter the fallback, which makes progress efficiently even under faults (as the \vaba under faults suggests).

\paragraph{Attacks on the leaders.}
Figure~\ref{fig:leader-under-dos} presents the measurement results.
When the eventual synchrony assumption does not hold, either due to DDoS attacks on the leaders or adversarial delays on the leaders' messages, \threechain and \sync will have no liveness, i.e., the throughput of the system is always $0$. The reason is that whenever a replica becomes the leader for some round, its proposal message is delayed and all other replicas will timeout for that round.
On the other hand, \async and \vaba are robust against such adversarial delays and can make progress under asynchrony.
The performance of the \vaba protocol implementation is not affected much by delaying a certain replica's proposal, as observed from the previous experiment with crashed replicas. Therefore, we use it as a baseline to compare with our \async protocol implementation.
Our results, confirm our theoretical assumption as the asynchronous fallback performance of \async is very close to that of \vaba under $10$ or $20$ nodes, and slightly worse than \vaba under $50$ nodes. This extra latency cost is due to the few timeouts that are triggered during the exponential back-off.

\paragraph{Crash faults and attacks on the leaders.}
Figure~\ref{fig:dead-nodes-and-dos} presents the measurement results.
The throughput of \threechain and \sync is again $0$ under leader attacks for the same reason mentioned above.
\vaba makes progress even under leader attacks, and its performance is also robust against 1 or 3 node crashes.
The performance of \async is very close to that of \vaba, since our implementation adopts the exponential backoff mechanism for the asynchronous fallback.
One interesting artifact of the exponential backoff is that, compared to the case under just crash faults and no DDoS (Figure~\ref{fig:dead-nodes}), \async has better performance under both crash faults and DDoS (Figure~\ref{fig:dead-nodes-and-dos}).
The reason is that under long periods of leader attacks, the \async will skip waiting for the time to expire for most of the views, and directly send timeouts and enter the fallback.

\paragraph{Take away.}
To conclude, there is little reason not to use \async as our experiments confirm our theoretical bounds. \async adapts to the network behavior and achieves almost optimal performance. The only system that sometimes outperforms \async is \vaba during intermittent periods of asynchrony as it does not pay the timeout cost of \async when deciding how to adapt. This, however, comes at a significant cost when the network is good and in our opinion legitimizes the superiority of \async when run over the Internet.

\section{Related Work}
\label{sec:relatedwork}

\paragraph{Eventually synchronous BFT.}
BFT SMR has been studied extensively in the literature.
A sequence of efforts~\cite{castro1999practical,buchman2016tendermint,buterin2017casper,gueta2019sbft,yin2019hotstuff,kogias2016enhancing} have been made to reduce the communication cost of the BFT SMR protocols, with the state-of-the-art being HotStuff~\cite{yin2019hotstuff} that has $O(n)$ cost for decisions, a 3-chain commit latency under synchrony and honest leaders, and $O(n^2)$ cost for view-synchronization. \sync presents another step forward from HotStuff as we realize the co-design of the pacemaker with the commit rules enables removing one round without sacrificing the linear happy path. Two concurrent theoretical works propose a 2-chain variation of the HotStuff as well~\cite{jalalzai2020fast, rambaud2020malicious}.
However, 
both protocols fail to realize the full power of 2-chain protocols missing the fact the view-change can become robust and DDoS resilient.

\paragraph{Asynchronous BFT.}
Several recent proposals focus on improving the communication complexity and latency, including HoneyBadgerBFT~\cite{miller2016honey}, VABA~\cite{abraham2019asymptotically}, Dumbo-BFT~\cite{guo2020dumbo}, Dumbo-MVBA~\cite{lu2020dumbo}, ACE~\cite{spiegelman2019ace}, Aleph~\cite{gkagol2019aleph}, and DAG-Rider~\cite{keidar2021all}.
A recent work 2PAC (2-phase asynchronous consensus)~\cite{cryptoeprint:2024:1108} achieves $9.5$ rounds as the expected latency and has $O(n^2)$ message complexity. 
The state-of-the-art protocols for asynchronous SMR have $O(n^2)$ cost per decision~\cite{spiegelman2019ace}, or amortized $O(n)$ cost per decision after transaction batching~\cite{guo2020dumbo, lu2020dumbo, gkagol2019aleph, keidar2021all}.

\paragraph{BFT with optimistic and fallback paths.}
To the best of our knowledge,~\cite{kursawe2005optimistic} is the first asynchronous BFT protocol with an efficient happy path.
Their asynchronous path has $O(n^3)$ communication cost while their happy path has $O(n^2)$ cost per decision, which was later extended~\cite{ramasamy2005parsimonious} to an amortized $O(n)$.
A recent paper~\cite{spiegelman2020search} further improved the communication complexity of asynchronous path to $O(n^2)$ and the cost of the happy path to $O(n)$.
The latency of these protocols is not optimized, 
  e.g. latency of the protocol in~\cite{spiegelman2020search} is $O(n)$.
Moreover, these papers are theoretical in nature and far from the realm of practicality.

Finally, a concurrent work named the Bolt-Dumbo Transformer (BDT)~\cite{lu2021bolt}, proposes a BFT SMR protocol with both synchronous and asynchronous paths and provides implementation and evaluation.
BDT takes the straightforward solution of composing three separate consensus protocols as black boxes. Every round starts with 1) a partially synchronous protocol (HotStuff), times-out the leader and runs 2) an Asynchronous Binary Agreement in order to move on and run 3) a fully asynchronous consensus protocol~\cite{guo2020dumbo} as a fallback.
Although BDT achieves asymptotically optimal communication cost for both paths this is simply inherited by the already known to be optimal back boxes. On the theoretical side, their design is beneficial since it provides a generally composable framework, but this generality comes at a hefty practical cost. BDT has a latency cost of 7 rounds (vs 5 of \async) at the fast path and of 45 rounds (vs 13.5 of \async) at the fallback, making it questionably practical. Finally, not opening the black-boxes stopped BDT from reducing the latency of HotStuff although it also has a quadratic view-change.


Our protocol, \async, has the asymptotically optimal communication cost in both the happy path and the asynchronous (fallback) path, but also good latency: in fact, due to the 2-chain design, the latency is even better than the state-of-the-art, (3-chain) HotStuff/DiemBFT and (3-chain) VABA protocols. \async also inherits the efficient pipelined design of these protocols (that are deployed in practice~\cite{baudet2019state}). In contrast with BDT, there is no overhead associated with switching between synchronous and asynchronous paths. We provide an implementation that organically combines the different modes of operation and extensive evaluation to support our claim that \async enjoys the best of both worlds with practical performance.


Another related line of work on optimistic BFT protocols for other network models, including
partially synchrony~\cite{abd2005fault, kotla2007zyzzyva, guerraoui2010next} and synchrony~\cite{pass2018thunderella, cryptoeprint:2018:980, nibesh2020optimality, cryptoeprint:2020:406}.
We leave the incorporation of an optimistic 1-chain commit path to \async to future work.

\paragraph{BFT protocols with flexible commit rules.}
Recently, a line of work studies and proposes BFT protocols with flexible commit rules for different clients of various beliefs on network and resilience assumptions, for the purpose of providing safety and liveness guarantees to the clients with the correct assumptions.
As a comparison, our work offers multiple commit paths for replicas under synchrony and asynchrony tolerating the same resilience threshold (one-thirds) and achieves optimal communication cost for each path asymptotically. 
Flexible BFT~\cite{malkhi2019flexible} proposes one BFT SMR solution supporting clients with various fault and synchronicity beliefs, but the guarantees only hold for the clients with the correct assumptions.
Strengthened BFT~\cite{xiang2021strengthened} shows how to strong commit blocks with higher resilience guarantees for partially synchronous BFT SMR protocols.
Several recent works~\cite{neu2020ebb, sankagiri2020blockchain} investigate how to checkpointing a synchronous longest chain protocol using partially synchronous BFT protocols, and offer clients with two different commit rules.

\section{Conclusion and Future Work}
We present \async, a practical byzantine SMR protocol that enjoys the best of both worlds: optimal communication on and off the happy path (linear and quadratic, respectively) and progress guarantees under the worst case asynchrony and DDoS attacks. As a secondary contribution, we design a 2-chain version of HotStuff, \sync, which leverages a quadratic view-change mechanism to reduce the latency of the standard 3-chain HotStuff.
We implement and experimentally evaluate all our systems to validate our theoretical analysis.

An interesting future work would be to incorporate an optimistic 1-chain commit path in our \sync and \async, which can further improve the latency performance of our systems when there are no faults.

\ifdefined\cameraReady
\section*{Acknowledgments}
This work is supported by the Novi team at Facebook. We
also thank the Novi Research and Engineering teams for
valuable feedback, and in particular Mathieu Baudet, Andrey
Chursin, George Danezis, Zekun Li, and Dahlia Malkhi for discussions
that shaped this work.

We thank Nibesh Shrestha, Isaac Doidge and Raghavendra Ramesh from Supra for pointing out an issue in the description of DiemBFT Pacemarker in the previous version of the paper. 
We thank Matthieu Rambaud for pointing out an issue in the DiemBFT timeout certificate definition in the previous version of the paper. 
The issues are fixed in this version.

We also thank Matthieu Rambaud for pointing out flaws of the \async protocol and its byproduct 2-chain VABA in the previous version of the paper. 
The \sync protocol remains secure. 
Please see~\cite{cryptoeprint:2024:1108} for the attacks on 2-chain VABA and the new protocol 2PAC that achieves $O(n^2)$ message complexity and $9.5\delta$ expected latency. 
For \async, we fixed the protocol in this version. 

\fi

\bibliographystyle{plain}
\bibliography{references}

\appendix

\section{Correctness of \sync}
\label{sec:jolteon-proof}
\subsection{Safety}
We begin by formalizing some notation.
\begin{itemize}
   \item We call a block byzantine (honest) if it was proposed by a byzantine (honest) replica.
   \item We say that a block $B$ is \emph{certified} if a quorum certificate $QC_B$ exists.
   \item $B_i \longleftarrow QC_i \longleftarrow B_{i+1}$  means that the block $B_i$ is certified by the quorum certificate $QC_i$ which is contained in the block $B_{i+1}$.
   \item $B_i \longleftarrow^* B_j$  means that the block $B_j$ \emph{extends} the block $B_i$.
   That is, there is exists a sequence $B_i \longleftarrow QC_i \longleftarrow B_{i+1} \longleftarrow QC_{i+1} \cdots \longleftarrow QC_{j-1} \longleftarrow B_{j}$ 
\end{itemize}
\begin{definition}[Global direct-commit]
\label{def:commit}
We say that a block $B$ is \emph{globally direct-committed} if $f+1$ honest replicas each successfully perform the \textbf{Vote} step on block $B'$ proposal in round $B.r+1$, such that $B'.qc$ certifies $B$.
These \textbf{Vote} calls invoke \textbf{Lock}, setting $qc_{high} \gets B'.qc$,
  and return $f+1$ matching votes (that could be used to form a $QC_{B'}$ with $f$ other matching votes). 
\end{definition}


\begin{lemma}
\label{lem:locommit}
If an honest replica successfully performs the \textbf{Commit} step on block $B$ then $B$ is globally direct-committed.
\end{lemma}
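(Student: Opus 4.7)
The plan is a short definition-chasing argument. Suppose an honest replica $h$ successfully performs the \textbf{Commit} step on $B$. By the $2$-chain commit rule of \sync (Figure~\ref{fig:2chain-diembft}), $h$ must observe two adjacent certified blocks $B, B'$ in the chain with $B'.r = B.r+1$. Since $B'$ is adjacent to $B$ in the chain, $B'.qc$ is a QC for $B$. I would first unpack this observation carefully, noting that for $h$ to see $B'$ certified it must see a valid QC $QC_{B'}$, i.e.\ $2f+1$ matching vote shares on $B'$.

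Next I would appeal to a quorum-honesty argument: among the $2f+1$ distinct replicas whose shares compose $QC_{B'}$, at most $f$ are Byzantine, so at least $f+1$ are honest. By the protocol, an honest replica only issues its signature share $\thsig{B'.id, B'.r, B'.v}_i$ by \emph{successfully} executing the \textbf{Vote} step on $B'$'s proposal; in particular, since $B'.r = B.r+1 = B'.qc.r+1$, condition~(1) of the Vote step is satisfied, so the vote is legitimate. Each such successful \textbf{Vote} invocation first calls \textbf{Lock} with $qc = B'.qc$, updating $\hqc \gets \max(\hqc, B'.qc)$, hence setting $\hqc$ to be at least as high as the QC certifying $B$.

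Collecting these facts matches Definition~\ref{def:commit} verbatim: there exist $f+1$ honest replicas that performed the \textbf{Vote} step on a proposal $B'$ in round $B.r+1$ with $B'.qc$ certifying $B$, the Lock invocations set their $\hqc$ to $B'.qc$, and the emitted shares are $f+1$ matching votes that (together with the remaining $f$ shares in $QC_{B'}$) form a quorum certificate for $B'$. Therefore $B$ is globally direct-committed.

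The argument involves no real obstacle beyond being careful about two things: (i) confirming that the shares inside $QC_{B'}$ from honest replicas really do come from the full \textbf{Vote} procedure (and not from some other path that bypasses \textbf{Lock}), which is immediate from the protocol description since the share $\thsig{id,r,v}_i$ is only produced inside \textbf{Vote}; and (ii) handling the pathological case where $B'$ is Byzantine-proposed but still certified — the quorum-intersection bound still gives $f+1$ honest voters, so the definition is satisfied regardless of whether the proposer of $B'$ is honest.
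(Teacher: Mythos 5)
Your proof is correct and takes essentially the same route as the paper's (much terser) argument: the \textbf{Commit} condition yields the chain $B \longleftarrow QC_B \longleftarrow B' \longleftarrow QC_{B'}$ with $B'.r = B.r+1$, and the existence of $QC_{B'}$ forces $f+1$ honest replicas to have executed \textbf{Vote} on $B'$, which is exactly Definition~\ref{def:commit}. Your extra care about honest shares only arising inside \textbf{Vote} (and hence triggering \textbf{Lock}) is a reasonable unpacking of what the paper leaves implicit.
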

\begin{proof}
By the \textbf{Commit} condition, there exists a chain $B \longleftarrow QC_{B} \longleftarrow B' \longleftarrow QC_{B'}$ with $B'.r = B.r + 1$.
The existence of $QC_{B'}$ implies that $f+1$ honest replicas did \textbf{Vote} for $B'$.
\end{proof}

The next lemma follows from the voting rules and the definition of global direct commit.
\begin{lemma}
\label{lem:highqc}
If a block $B$ is globally direct-committed then any higher-round TC contains $qc_{high}$ 
  of round at least $B.r$.
\end{lemma}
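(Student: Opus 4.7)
The plan is to combine quorum intersection with the monotonicity of $qc_{high}$. By Definition~\ref{def:commit}, there is a set $S$ of $f{+}1$ honest replicas who each successfully executed \textbf{Vote} on a block $B'$ of round $B.r+1$ such that $B'.qc$ certifies $B$. During \textbf{Vote}, the replica first invokes \textbf{Lock}, which sets $qc_{high} \gets \max(qc_{high}, B'.qc)$; since $B'.qc.r = B.r$, every replica in $S$ has $qc_{high}.r \geq B.r$ immediately after voting.

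Next, observe that $qc_{high}$ is monotonically non-decreasing at every replica (Lock only overwrites it with a higher-ranked QC). Now fix any TC of round $r > B.r$: it is assembled from $2f+1$ timeout messages, so by quorum intersection with $S$ we have $(2f+1) + (f+1) - (3f+1) = f+1 \geq 1$ honest replica $h \in S$ whose \timeout for round $r$ is included in the TC. Such a \timeout carries $h$'s then-current $qc_{high}$, so it suffices to argue that $h$'s \textbf{Vote} on $B'$ preceded its \timeout for round $r$; combined with monotonicity, this yields $qc_{high}.r \geq B.r$ in the \timeout, hence in the TC.

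The main obstacle is this ordering argument. To \textbf{Vote} on $B'$, $h$ needs $r_{cur}=B.r+1$; to \timeout for round $r$, $h$ needs $r_{cur}=r$. Since $r_{cur}$ is non-decreasing, if $r > B.r+1$ the vote is forced to happen first. The delicate case is $r = B.r+1$: here we use the pacemaker rule that once the timer expires the replica stops voting for its current round, so any $h \in S$ that did vote on $B'$ must have done so strictly before timing out at round $B.r+1$. In every case the \timeout from $h$ embeds a $qc_{high}$ of round at least $B.r$, which is exactly the claim.
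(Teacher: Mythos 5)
Your proof is correct and takes essentially the same route as the paper's: the $f{+}1$ honest replicas from Definition~\ref{def:commit} lock a QC of round $B.r$, $qc_{high}$ is monotone, quorum intersection puts one of them in any later TC, and the "timing out stops voting" rule settles the ordering in the delicate round $r=B.r+1$ case (which you spell out more explicitly than the paper does). One trivial slip: $(2f+1)+(f+1)-(3f+1)=1$, not $f+1$, but an intersection of size at least one honest replica is all the argument needs.
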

\begin{proof}
By Definition~\ref{def:commit}, $f+1$ honest replicas execute the \textbf{lock} step in round $B.r + 1$
  and set $qc_{high}$ to $B'.qc$ that certifies $B$ (so $B'.qc.r = B.r$, $B'$ is the block proposed in round $B.r+1$).
None of these honest replicas may have previously timed out in round $B.r + 1$,
  and timing out stops voting in a round (but the replicas voted for $B'$).

Since $qc_{high}$ is never decreased, a timeout message prepared by any of the above $f+1$ honest replicas
  in rounds $> B.r$ contains a high qc of round at least $B.r$.
By quorum intersection, timeout messages used to prepare the TC in any round $> B.r$ contain a message from one of these honest replicas, completing the argument.
\end{proof}

Due to quorum intersection, we have
\begin{observation}
\label{obs:cert}
If a block is certified in a round, no other block can gather $f+1$ honest votes in the same round. Hence, at most one block is certified in each round.
\end{observation}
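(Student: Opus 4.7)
The plan is to prove this via a standard quorum-intersection argument combined with the single-vote-per-round property enforced by the \textbf{Vote} rule of \sync.

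First, I would establish the single-vote-per-round invariant for honest replicas. Looking at the \textbf{Vote} rule in Figure~\ref{fig:2chain-diembft}, an honest replica votes for a proposal of round $r$ only if $r > r_{vote}$, and immediately after voting it updates $r_{vote} \gets r$. Therefore, any honest replica casts at most one vote per round, regardless of how many proposals it receives for that round.

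Next, I would suppose for contradiction that two distinct blocks $B \neq B'$ both gather $f+1$ honest votes in the same round $r$. Let $H_B$ and $H_{B'}$ denote the corresponding sets of honest voters, each of size at least $f+1$. Since there are at most $2f+1$ honest replicas in total, an inclusion--exclusion count gives $|H_B \cap H_{B'}| \ge |H_B| + |H_{B'}| - (2f+1) \ge (f+1) + (f+1) - (2f+1) = 1$. Hence some honest replica is in both $H_B$ and $H_{B'}$, meaning it voted for two distinct blocks in round $r$, contradicting the invariant above.

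To get the first half of the observation, I note that if $B$ is certified in round $r$ then $QC_B$ was assembled from $2f+1$ signature shares, of which at least $f+1$ come from honest replicas (since at most $f$ replicas are Byzantine); so $B$ itself gathers $f+1$ honest votes, and the contradiction above rules out any other block $B' \ne B$ in the same round gathering $f+1$ honest votes. The ``hence'' conclusion then follows immediately: certification of any block in round $r$ requires at least $f+1$ honest votes, so at most one block can be certified per round.

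The argument is essentially a counting exercise, so there is no serious obstacle; the only subtle point is to correctly invoke that $r_{vote}$ is strictly updated before further votes in round $r$ could be cast, which the \textbf{Vote} rule guarantees by placing the assignment $r_{vote} \gets r$ as part of the same atomic vote action.
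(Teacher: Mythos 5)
Your proof is correct and fills in exactly the quorum-intersection argument the paper invokes without elaboration (it states the observation with only the remark ``Due to quorum intersection''): the single-vote-per-round invariant from the \textbf{Vote} rule plus the counting $(f+1)+(f+1)-(2f+1)\ge 1$ over the honest replicas is the intended reasoning. No issues.
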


We can now prove the key lemma
\begin{lemma}
\label{lem:keysafety}
For every certified block $B'$ s.t.\ $B'.r \geq B.r$ such that $B$ is globally direct-committed, $B \longleftarrow^* B'$. 
\end{lemma}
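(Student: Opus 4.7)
The plan is to prove Lemma~\ref{lem:keysafety} by strong induction on $B'.r$. The base case $B'.r = B.r$ is immediate: global direct-commit of $B$ presupposes a round-$(B.r{+}1)$ proposal carrying $QC_B$, so $B$ itself is certified, and Observation~\ref{obs:cert} then forces $B' = B$.

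For the inductive step I would use that certification of $B'$ entails at least $f+1$ honest voters, each of whom passed the \textbf{Vote} predicate, so either condition~(1) $B'.r = B'.qc.r + 1$ or condition~(2) $B'.r = B'.tc.r + 1$ with $B'.qc.r \geq \max\{qc_{high}.r \in B'.tc\}$ holds on the proposal $B'$ itself. In case~(1), $B'.qc$ certifies a block $B''$ of round $B'.r - 1 \geq B.r$; Observation~\ref{obs:cert} identifies $B''$ with $B$ when $B'.r - 1 = B.r$, and the inductive hypothesis handles the rest, yielding $B \longleftarrow^* B'$ either way. In case~(2), if additionally $B'.r > B.r + 1$, then $B'.tc$ is a TC of round strictly greater than $B.r$, so Lemma~\ref{lem:highqc} gives $\max\{qc_{high}.r \in B'.tc\} \geq B.r$; condition~(2) then forces $B'.qc.r \geq B.r$, and the argument closes exactly as in case~(1), using Observation~\ref{obs:cert} when $B'.qc.r = B.r$ and induction otherwise.

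The main obstacle is the remaining subcase, $B'.r = B.r + 1$ under case~(2) but not case~(1) (so $B'.qc.r < B.r$), because Lemma~\ref{lem:highqc} is stated for TCs of round strictly greater than $B.r$ and does not apply when $B'.tc$ has round exactly $B.r$. I would rule this subcase out by a quorum-intersection argument that exploits the direct-commit structure itself. Let $H$ be the set of $f+1$ honest replicas from Definition~\ref{def:commit} that voted on the round-$(B.r{+}1)$ child of $B$. Each such replica updated $r_{vote} \gets B.r + 1$ during that vote, and the \textbf{Vote} step processes only the first valid proposal per round and requires $r > r_{vote}$; hence no replica in $H$ can vote for $B'$, whose round equals $B.r + 1 = r_{vote}$. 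Honest voters for $B'$ must therefore lie within the $(2f+1) - (f+1) = f$ honest replicas outside $H$, giving at most $f + f = 2f$ total votes, which contradicts $B'$ being certified. So this subcase cannot arise, completing the induction.
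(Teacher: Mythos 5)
Your proof is correct and follows essentially the same route as the paper's: induction on the round number, using Observation~\ref{obs:cert} for rounds $B.r$ and $B.r+1$, and combining Lemma~\ref{lem:highqc} with voting condition~(2) when the proposal carries a TC of round $>B.r$. The only organizational difference is that the paper makes round $B.r+1$ the base case (so the awkward subcase where $B'.tc$ has round exactly $B.r$ never arises in its inductive step), whereas you fold that round into the induction and dispatch it with an explicit quorum-intersection argument that is, in substance, a re-derivation of Observation~\ref{obs:cert}.
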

\begin{proof}
By Observation~\ref{obs:cert}, 
$B \longleftarrow^* B'$ for every $B'$ s.t.\  $B'.r = B.r$.

We now prove the lemma by induction on the round numbers $r' > B.r$.
\vspace{0.4mm}

\textbf{Base case:} 
Let $r' = B.r + 1$.
$B$ is globally direct-committed, so by Definition~\ref{def:commit}, there are $f+1$ honest replicas that prepare votes in round $r' = B.r + 1$ on some block $B_{r+1}$ such that $B \longleftarrow QC_{B} \longleftarrow B_{r+1}$.
By Observation~\ref{obs:cert}, only $B_{r+1}$ can be certified in round $r'$.

\vspace{0.4mm}
\textbf{Step:} We assume the Lemma holds up to round $r' - 1 > B.r$ and prove that it also holds for $r'$.
If no block is certified at round $r'$, then the induction step holds vacuously.
Otherwise, let $B'$ be a block certified in round $r'$ and let $QC_{B'}$ be its certificate.
$B$ is globally direct-committed, so by Definition~\ref{def:commit}, there are $f+1$ honest replicas that have
  locked high qc in round $B.r+1$.
One of these replicas, $v$, must also have prepared a vote that is included in $QC_{B'}$ (as QC formation requires $2f+1$ votes and there are $3f+1$ total replicas).

Let $B'' \longleftarrow QC_{B''} \longleftarrow B'$ and denote $r'' = B''.r = QC_{B''}.r$. 
There are two cases to consider, $r'' \geq r$ and $r'' < r$. In the first case, 
by the induction assumption for round $r''$, $B \longleftarrow^* B''$ and we are done.

In the second case, $r'' < r < r'$ (the right inequality is by the induction step), i.e., the rounds for $B''$ and $B'$ are not consecutive. Hence, $B'$ must contain a TC for round $r' - 1$. 
By Lemma~\ref{lem:highqc}, this TC contains a $qc_{high}$ with round $\geq r$.

Consider a successful call by an honest replica to vote for $B'$. 
The only way to satisfy the predicate to vote is to satisfy (2),
which implies ${B''}.r \geq qc_{high}.r \geq B.r$, which is a contradiction to $r'' < r$.
\end{proof}

As a corollary of Lemma~\ref{lem:keysafety} and the fact that every globally direct-committed block is certified, we have
\begin{theorem}
\label{thm:agreement}
For every two globally direct-committed blocks $B,B'$, either $B \longleftarrow^* B'$ or $B' \longleftarrow^* B$.  
\end{theorem}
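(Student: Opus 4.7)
The plan is to derive this theorem as an almost immediate corollary of Lemma~\ref{lem:keysafety}, since the real work has already been done in that lemma's inductive safety argument. First I would observe that any globally direct-committed block is itself certified: by Definition~\ref{def:commit}, if $B$ is globally direct-committed then $f+1$ honest replicas performed the \textbf{Vote} step on a round-$(B.r+1)$ proposal $B_{+}$ whose $B_{+}.qc$ certifies $B$, so a QC for $B$ exists. The same reasoning applies to $B'$.

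Next, without loss of generality, I would assume $B.r \leq B'.r$ (swapping the names of the two blocks otherwise; the statement is symmetric in $B$ and $B'$). With that convention in place, $B'$ is a certified block of round at least $B.r$, while $B$ is globally direct-committed. These are exactly the hypotheses of Lemma~\ref{lem:keysafety}, whose conclusion gives $B \longleftarrow^* B'$, which is one of the two disjuncts claimed by the theorem.

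There is essentially no obstacle at this stage: the heavy lifting — the quorum-intersection argument, Observation~\ref{obs:cert}, Lemma~\ref{lem:highqc} about the $qc_{high}$ carried by any future TC, and the induction on round numbers handling both consecutive rounds (condition (1)) and the TC-based voting rule (condition (2)) — is already encapsulated in the proof of Lemma~\ref{lem:keysafety}. The only subtlety worth double-checking is the edge case $B.r = B'.r$, which is handled at the base of Lemma~\ref{lem:keysafety}'s induction via Observation~\ref{obs:cert}: at most one block is certified per round, so $B = B'$ (and trivially $B \longleftarrow^* B'$). Thus the full proof is just the WLOG step plus an appeal to Lemma~\ref{lem:keysafety}.
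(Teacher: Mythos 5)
Your proposal is correct and matches the paper's own argument: the paper likewise derives Theorem~\ref{thm:agreement} directly as a corollary of Lemma~\ref{lem:keysafety} together with the observation that every globally direct-committed block is certified. You merely spell out the WLOG ordering of rounds and the equal-round edge case, which the paper leaves implicit.
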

Let's call a successful invocation of the \textbf{Commit} step by a replica a local direct-commit.
For every locally committed block, there is a locally direct-committed block that extends it, and due to Lemma~\ref{lem:locommit}, also a globally direct-committed block that extends it. 
Each globally committed block defines a unique prefix to the genesis block, 
so Theorem~\ref{thm:agreement} applies to all committed blocks.
Hence all honest replicas commit the same block at each position in the blockchain.

Furthermore, since all honest replicas commit the transactions in one block following the same order, honest replicas do not commit different transactions at the same log position. 

\subsection{Liveness}
\begin{lemma} 
\label{lem:trigger}
When an honest replica in round $<r$ receives a proposal
  for round $r$ from another honest replica, it enters round $r$.
\end{lemma}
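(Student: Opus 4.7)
The plan is to read the lemma off the \textbf{Propose} and \textbf{Advance Round} rules of \sync in Figure~\ref{fig:2chain-diembft}. An honest leader only multicasts a round-$r$ proposal after it has itself entered round $r$, and by the Pacemaker this requires the leader to have seen or formed either a round-$(r-1)$ QC or a round-$(r-1)$ TC. The key observation is that the proposal carries this witness with it, so any receiver learns the same certificate and can advance.

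First, I would inspect the shape of $B=\block{id,qc,tc,r,v_{cur},txn}$ produced by an honest $L_r$. By the \textbf{Propose} rule, either $tc$ is the round-$(r-1)$ TC that made $L_r$ enter round $r$, or $tc=\bot$, in which case $L_r$ must have entered round $r$ via a round-$(r-1)$ QC. Because $L_r$ is honest, the proposal is valid in the sense of the \textbf{Vote} rule, so it satisfies either condition (1) $r = qc.r+1$, giving $qc.r = r-1$, or condition (2) $r = tc.r+1$, giving $tc.r = r-1$. In both cases, the proposal literally contains a certificate for round $r-1$, whether a QC in $B.qc$ or a TC in $B.tc$.

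Second, I would apply the \textbf{Advance Round} rule in the Pacemaker, which fires on receipt of a round-$(r-1)$ QC or round-$(r-1)$ TC and sets $r_{cur} \gets \max(r_{cur}, r)$. Since the receiving honest replica is in some round $<r$ and now extracts such a certificate from the proposal, it enters round $r$. The handler ordering stated in the \textbf{Vote} step (\textit{Advance Round}, then \textit{Lock}, then \textit{Commit}, then attempt to vote) ensures that the round advancement happens as soon as the proposal is received, independently of whether the subsequent voting predicate is met.

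The only subtlety, and the main thing to be careful about, is the case split on why $L_r$ entered round $r$: if it entered via a TC, $B.tc$ is the round-$(r-1)$ TC and $B.qc$ may be lower-ranked; if it entered via a QC, $B.tc=\bot$ but condition (1) of the \textbf{Vote} rule forces $B.qc$ to be the round-$(r-1)$ QC. In either branch, exactly one of the two Advance Round triggers is present, which is what the lemma needs. No additional machinery (safety, quorum intersection, etc.) is required, so the argument is essentially a walk through the message format and the Pacemaker conditions.
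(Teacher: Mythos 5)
Your proof is correct and follows the same route as the paper's (much terser) argument: a well-formed proposal from an honest leader necessarily carries a round-$(r-1)$ QC or TC, and the \textbf{Advance Round} rule then moves the receiver into round $r$ regardless of whether the voting predicate later succeeds. Your explicit case split on how $L_r$ entered round $r$ is a faithful expansion of what the paper leaves implicit.
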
 
\begin{proof}
Recall that a well-formed proposal sent by an honest replica contains either a TC or QC of round $r-1$.
When an honest replica receives such a proposal message,
  it will advance the round and enter round $r$.
\end{proof}
\begin{lemma}
\label{lem:rounds}
If the round timeouts and message delays between honest replicas are finite,
  then all honest replicas keep entering increasing rounds.
\end{lemma}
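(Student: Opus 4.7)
The plan is to argue by contradiction: suppose some honest replica stops entering new rounds, and let $r^*$ be the largest round that any honest replica ever enters. I will show that all honest replicas in fact reach $r^*$, time out, and together form a TC for $r^*$ that advances them to $r^*+1$, contradicting the maximality of $r^*$.

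The main step is an induction on $r$ establishing that every honest replica eventually enters round $r$, for all $r \le r^*$. The base case $r = 1$ holds since every replica initializes $r_{cur} = 1$. For the inductive step, assume every honest replica has entered round $r-1$; each then starts a finite timer. If every honest replica's timer in $r-1$ expires before it advances, each multicasts a \timeout message and, by finite message delays, every honest replica eventually receives at least $2f+1$ such timeouts, locally forms a round-$(r-1)$ TC, and enters round $r$ via Advance Round. In the alternative case, some honest replica advances earlier via a round-$(r-1)$ QC and therefore has $\hqc.r \ge r-1$; the timeout it later multicasts (in round $\ge r$, whose timer also expires) then carries an $\hqc$ of round $\ge r-1$, which upon arrival lets any still-lagging honest replica execute Advance Round and move $r_{cur}$ to at least $r$.

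Instantiating the induction at $r = r^*$ places every honest replica in round $r^*$. Since by assumption no honest replica ever enters round $r^*+1$, every one of these timers must expire, and each honest replica multicasts a \timeout message for $r^*$; by finite delays, every honest replica receives at least $2f+1$ of these timeouts, forms a TC for $r^*$, and advances to $r^*+1$ via Advance Round, contradicting the choice of $r^*$. The point I expect to be the main obstacle is the mixed sub-case of the inductive step, where some honest replicas advance out of $r-1$ via a TC (not a QC) whose $\hqc$'s all have round strictly below $r-1$, so that their subsequent timeouts do not help laggards advance through the $\hqc$ field alone. Handling this requires arguing that the remaining honest replicas either still accumulate $2f+1$ timeouts for round $r-1$ locally and build the TC themselves, or eventually observe the TC propagated through a subsequent leader's proposal, which in \sync explicitly carries it; in either case no honest replica can remain stuck below $r$.
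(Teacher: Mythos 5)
Your argument reaches the same conclusion as the paper's but by a genuinely different route. The paper argues directly about the \emph{lowest} round: it fixes a round $r$ with all honest replicas at $r$ or above and shows the replica stuck at $r$ must advance, splitting into (i) all honest replicas time out at $r$, so the laggard assembles a TC itself; (ii) some honest replica $v'$ advances after seeing a round-$r$ QC and later times out, so its timeout carries a $\hqc$ of round $\geq r$; and (iii) $v'$ never times out again, in which case QCs form in every subsequent round, an honest leader eventually proposes, and that proposal (via Lemma~\ref{lem:trigger}) pulls the laggard forward. You instead run a contradiction on the \emph{highest} round $r^*$ ever entered, with an induction showing every honest replica reaches every round up to $r^*$ (this should read ``enters a round $\geq r$,'' since Advance Round takes a max and lets replicas skip rounds). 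A pleasant side effect of your framing is that the paper's case (iii) evaporates: under the contradiction hypothesis every honest replica eventually parks at some round $\leq r^*$ whose finite timer expires, so the ``later timeout'' your second sub-case relies on is guaranteed to exist --- something the paper must argue for separately.

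The one place you are not airtight is the sub-case you yourself flag: an honest replica that leaves round $r-1$ via a round-$(r-1)$ TC it received or assembled (possibly using timeout shares that Byzantine replicas delivered only to it), without itself timing out and without any round-$(r-1)$ QC existing. Its subsequent timeouts may carry a $\hqc$ of round $<r-1$, and neither of your proposed escapes is forced: the laggards need not accumulate $2f+1$ timeout shares for round $r-1$ (the Byzantine shares may never reach them), and the TC is forwarded only to $L_r$, who may be Byzantine. Be aware, though, that the paper's own proof elides exactly the same case --- it asserts that a replica advancing without timing out ``must have observed a qc of round $r$,'' silently ignoring advancement on a TC. So your proof is no less rigorous than the paper's here, and more candid about where the difficulty sits.
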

\begin{proof}
Suppose all honest replicas are in round $r$ or above, and let $v$ be an honest replica in round $r$.

We first prove that some honest replica enters round $r+1$. 
If all $2f+1$ honest replicas time out in round $r$,
  then $v$ will eventually receive $2f+1$ timeout messages, form a TC and enter round $r+1$.
Otherwise, at least one honest replica, $v'$ -- not having sent a timeout message for round $r$
  -- enters round $r+1$. 
For this, $v'$ must have observed $qc$ or $TC$ of round $r$ and updated its $qc_{high}$ or $TC_{last}$ accordingly.

Suppose $v'$ updated its $qc_{high}$.
Since $qc_{high}$ is never decreased and included in timeout messages,
  if $v'$ times out in any round $>r$, 
  then its timeout message will trigger $v$ to enter a round higher than $r$.
Otherwise, $v'$ must observe a QC in all rounds $> r$.
In this case, an honest leader sends a proposal in some round $>r$.
That proposal will eventually be delivered to $v$, triggering it to enter a higher round by Lemma~\ref{lem:trigger}.

Suppose $v'$ updated its $TC_{last}$.
Recall that $TC_{last}$ is never decreased, or only updated to $\bot$ when receiving a higher-round $qc$.
Suppose $TC_{last}$ is never updated to $\bot$. Since $TC_{last}$ is included in timeout messages, if $v'$ times out in any round $>r$, then its timeout message will trigger $v$ to enter a round higher than $r$.
Otherwise, $v'$ must observe a QC in all rounds $> r$, similarly $v$ will enter a higher round by Lemma~\ref{lem:trigger}.
Suppose $TC_{last}$ is updated to $\bot$ due to a higher-round $qc$. Then the previous argument applies. 

\end{proof}
In an eventually synchronous setting, the system becomes synchronous after the the global stabilisation time (GST).
We assume a known upper bound $\Delta$ on message transmission delays among honest replicas
  (practically, a back-off mechanism can be used to estimate $\Delta$) and
  let $4\Delta$ be the local timeout threshold for all honest replicas in all rounds.

Rounds are consecutive, advanced by quorum or timeout certificates, and honest replicas wait for proposals in each round.
We first show that honest replicas that receive a proposal without the round timer expiring accept the proposal, 
  allowing the quorum of honest replicas to drive the system progress.
\begin{lemma}
\label{lem:optresp}
Let $r$ be a round such that no QC has yet been formed for it and in which no honest replica has timed out.
When an honest replica $v$ receives a proposal of an honest leader of round $r$,
  $v$ will vote for the proposal.
\end{lemma}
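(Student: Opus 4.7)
The plan is to verify each precondition of the \textbf{Vote} rule from Figure~\ref{fig:2chain-diembft} when $v$ receives the honest leader's proposal $B = \block{id, qc, tc, r, 0, txn}$. The view number check is trivial in \sync since $v_{cur} = 0$ throughout, so four things must be established: that $r = r_{cur}$ (after \textbf{Advance Round}), that $r > r_{vote}$, that one of the two safety predicates holds, and that $v$'s round-$r$ timer has not silenced it.

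First, I would argue that no honest replica is in a round above $r$. The \textbf{Advance Round} rule fires only on a round-$(r-1)$ QC or a round-$(r-1)$ TC for entering round $r$, and analogously for higher rounds. By hypothesis, no QC for round $r$ has been formed; and any TC for round $r$ would require $2f+1$ timeout messages, hence at least $f+1$ from honest replicas, which the hypothesis also forbids. Thus every honest replica sits in some round $\leq r$. In particular $v$'s $r_{cur} \leq r$, so the \textbf{Advance Round} invoked upon receipt of $B$ sets $r_{cur} = r$.

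Next, I would verify $r > r_{vote}$. Since $r_{vote}$ is only updated in the \textbf{Vote} step, and voting for round $r'$ requires having received a proposal for round $r'$ (which in turn requires being in round $r'$), and since we just showed no honest replica is in any round exceeding $r$, no honest replica has $r_{vote} > r$. For round $r$ itself: $L_r$ is honest and broadcasts a single valid proposal, so ``the first valid proposal from $L_r$'' that $v$ receives is precisely $B$; $v$ cannot have voted earlier for round $r$, hence $r_{vote} < r$. Then I would check safety: because $L_r$ is honest, it entered round $r$ either through a round-$(r-1)$ QC (in which case $B.qc.r = r-1$ and predicate (1) $r = qc.r + 1$ holds), or through a round-$(r-1)$ TC, which it attaches to $B$ while extending the highest $qc_{high}$ contained in that TC (so $B.r = tc.r + 1$ and $B.qc.r \geq \max\{qc_{high}.r : qc_{high} \in tc\}$, giving predicate (2)). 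Finally, the hypothesis that no honest replica has timed out in round $r$ includes $v$, so its timer has not caused it to stop voting in round $r$.

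The main obstacle is the bookkeeping around $r_{vote}$: ruling out that $v$ has already voted for round $\geq r$ before it sees $B$. This hinges on chaining together three facts that might initially appear unrelated: no QC for round $r$ exists, no honest timeouts for round $r$ have occurred, and an honest leader produces exactly one signed round-$r$ proposal. Together, these prevent any honest replica from ever having entered a higher round, and they rule out an earlier valid round-$r$ proposal reaching $v$. Once this is in place, everything else is a direct reading of the voting predicates.
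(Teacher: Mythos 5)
Your proof is correct and follows essentially the same route as the paper's: you rule out any honest replica having entered a round above $r$ (since no round-$r$ QC or TC can exist under the hypotheses), conclude that $v$ votes for the first time and in its largest round because the leader is honest, and discharge the safety predicate by the same case split on whether the honest leader entered round $r$ via a round-$(r-1)$ QC or TC. You are slightly more explicit than the paper about the $r = r_{cur}$, $r > r_{vote}$, and timer bookkeeping, but the underlying argument is identical.
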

\begin{proof}
The predicate in the \textbf{Vote} step for a proposal with block $B$ in round $r$ checks that
  (1) round numbers are monotonically increasing, and
  (2a) either the block extends the QC of the previous round ($B.qc.r + 1 = r$), or 
  (2b) the round of extended qc ($B.qc.r$) isn't less than the maximum high qc round in the TC of the previous round.

For (1), by assumption none of the $2f+1$ honest replicas have timed out, so no TC could have been formed for round $r$.
Also by assumption, no QC has been formed. Hence, no honest replica may have entered or voted in a round larger than $r$.
Round $r$ has an honest leader, so when a honest replica executes \textbf{Vote} step the round $r$ proposal,
  it does so for the first time and with the largest voting round.

For (2), we consider two cases.
If $B.tc = \bot$, then by well-formedness of honest leader's proposal, the $B.qc$
  it extends must have round number $r-1$, rounds are consecutive, and condition (a) holds.

If $B.tc$ is not empty, then it is a TC for round $r-1$, formed based on $2f+1$ timeout messages.
In this case, $B.qc.r \geq \max\{qc_{high}.r \mid qc_{high} \in B.tc\}$ predicate determines whether the replica votes
  for the proposal.
Since the leader is honest, $B.qc$ is the $qc_{high}$ of the leader when the proposal was generated.
The predicate holds as the honest leader updates the $qc_{high}$ to have round at least as large
  as the $qc_{high}$ of each timeout messages it receives (separately or within a forwarded TC).
\end{proof}
We now show a strong synchronization for rounds with honest leaders.
\begin{lemma}
  \label{lem:roundsync}
Let $r$ be a round after GST with an honest leader.
Within a time period of $2 \Delta$ from the first honest replica entering round $r$,
  all honest replicas receive the proposal from the honest leader.  
\end{lemma}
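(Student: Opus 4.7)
The overall plan is to establish the bound in two phases: first, show that the honest leader $L_r$ enters round $r$ within $\Delta$ of the first honest replica doing so; second, show that once $L_r$ enters, its multicasted proposal reaches every honest replica within another $\Delta$.

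Let $v$ be the first honest replica to enter round $r$, at time $t$. By the Advance Round rule, $v$ must have just received or formed either a round-$(r-1)$ QC or a round-$(r-1)$ TC. I would split on these two cases.

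In the TC case, the argument is essentially immediate: by the pacemaker rule, upon entering round $r$, $v$ forwards the round-$(r-1)$ TC to $L_r$. After GST, this message is delivered within $\Delta$, so $L_r$ enters round $r$ by $t + \Delta$.

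The QC case is the main obstacle, since unlike TCs, round-$(r-1)$ QCs are not explicitly forwarded by the pacemaker. The key observation I would use is that round-$(r-1)$ votes are sent only to $L_r$ (by the voting rule) and honest replicas vote at most once per round, so assembling such a QC requires at least $f+1$ honest votes that are delivered exclusively to $L_r$. Hence the QC can only be assembled by $L_r$ itself. Since $L_r$ is honest and enters round $r$ immediately upon forming the QC, any honest replica that later holds this QC --- whether via a proposal, a timeout message, or relay through a Byzantine node --- could only have received it after $L_r$ entered round $r$. Combined with $v$ being the first honest replica to enter round $r$, this forces $v = L_r$, so $L_r$ is in round $r$ already at time $t$.

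Combining both cases, $L_r$ is in round $r$ by $t + \Delta$; it then immediately multicasts its proposal, which by GST reaches every honest replica within another $\Delta$. This yields the $2\Delta$ bound claimed by the lemma.
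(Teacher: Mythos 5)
Your proof is correct and follows essentially the same route as the paper's: split on whether the first honest replica entered round $r$ via a TC (which is forwarded to $L_r$, arriving within $\Delta$) or via a QC (which forces that replica to be $L_r$ itself), then add $\Delta$ for the multicast of the proposal. The only difference is that you explicitly justify why the QC case collapses to $v = L_r$ (votes for round $r-1$ go only to $L_r$), a step the paper asserts without elaboration.
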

\begin{proof}
When the first honest replica enters round $r$, if it is not the leader, it must have formed a TC for round $r-1$.
Let $v$ be the honest leader of round $r$.
Since honest replicas forward TC to the leader of the next round, 
  $v$ will receive the TC and advance to round $r$ within $\Delta$ time of the first honest replica entering round $r$.

Upon entering round $r$, $v$ multicasts a proposal, which is delivered within $\Delta$ time to all honest replicas.
\end{proof}

Liveness follows from the following
\begin{theorem}
  Let $r$ be a round after GST.
  Every honest replica eventually locally commits some block $B$ with $B.r > r$.
\end{theorem}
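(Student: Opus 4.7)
The plan is to identify, after GST, two consecutive rounds $r_1$ and $r_1+1$ (with $r_1 > r$) whose leaders are both honest, and then show that the 2-chain commit rule fires on the block proposed in round $r_1$. Existence of such a pair is guaranteed because, by Lemma~\ref{lem:rounds}, honest replicas keep entering ever-increasing rounds, and the round-robin leader schedule assigns an honest leader to infinitely many consecutive pairs of rounds (any two honest replicas appear back-to-back in the schedule).

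Next I would argue that in such a pair, both blocks get certified and form a consecutive-round chain. Using Lemma~\ref{lem:roundsync}, within $2\Delta$ of the first honest replica entering $r_1$, every honest replica receives the honest leader's proposal; since the local timeout is $4\Delta$, no honest replica has timed out in $r_1$, and no prior QC for $r_1$ exists (as the honest leader is the only proposer and has just sent its unique proposal). Lemma~\ref{lem:optresp} then applies, so all $2f+1$ honest replicas vote, forming $QC_{B_{r_1}}$. The next leader (honest by choice of $r_1$) receives $QC_{B_{r_1}}$, enters round $r_1+1$, and multicasts a proposal extending it; a second application of Lemmas~\ref{lem:roundsync} and~\ref{lem:optresp} yields $QC_{B_{r_1+1}}$ with $B_{r_1+1}.r = B_{r_1}.r+1$. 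This is precisely the 2-chain pattern required by the \textbf{Commit} step.

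Finally, I would argue that every honest replica, not merely the leader of round $r_1+2$, eventually observes both QCs and hence executes \textbf{Commit} on $B_{r_1}$. Because honest replicas always forward $qc_{high}$ (inside proposals, timeout messages, and TCs) and never decrease it, and because rounds keep advancing past $r_1+1$ (Lemma~\ref{lem:rounds}), every honest replica will eventually receive a message carrying $QC_{B_{r_1+1}}$, which together with $QC_{B_{r_1}}$ triggers the local commit. Since $B_{r_1}.r = r_1 > r$, this establishes the theorem.

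The main obstacle is the careful timing argument for the two consecutive honest-leader rounds: I need that when the $2f+1$ honest replicas enter $r_1+1$ (some via the QC, some possibly via a TC for $r_1$ that a slow replica formed before learning the QC), their $4\Delta$ timers do not expire before the honest leader's $r_1+1$ proposal arrives. The slack provided by $4\Delta$ versus the $\leq 3\Delta$ needed (one hop for the QC to the new leader, one for the proposal, plus the $2\Delta$ spread inherited from Lemma~\ref{lem:roundsync}) is what makes this work, and spelling out this bookkeeping precisely is the only nontrivial piece.
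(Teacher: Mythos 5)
There is a genuine gap, and it is located exactly where you need a \emph{third} consecutive honest leader rather than two. In \sync, votes on the round-$(r_1{+}1)$ block $B_{r_1+1}$ are sent only to the leader $L_{r_1+2}$ of the next round; no honest replica other than $L_{r_1+2}$ ever receives the $2f+1$ shares needed to assemble $QC_{B_{r_1+1}}$. If $L_{r_1+2}$ is Byzantine, that QC may simply never come into existence: honest replicas time out in round $r_1+2$ with $qc_{high}=QC_{B_{r_1}}$ (the highest QC they actually hold), the resulting TC carries only $QC_{B_{r_1}}$, and the chain continues from $B_{r_1}$ with $B_{r_1+1}$ orphaned. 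Your closing claim that ``every honest replica will eventually receive a message carrying $QC_{B_{r_1+1}}$'' because replicas forward $qc_{high}$ is therefore unsupported --- forwarding only propagates QCs that some honest replica already possesses, and here none does. So the 2-chain commit on $B_{r_1}$ need not ever fire, and the theorem does not follow from a pair of honest-leader rounds alone.

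The paper's proof avoids this by choosing $r'$ so that $r'$, $r'+1$, \emph{and} $r'+2$ all have honest leaders (which the round-robin schedule guarantees, since $f$ Byzantine replicas among $3f+1$ cannot break every run of three). The honest leader of round $r'+2$ plays two roles your argument is missing: it is the recipient who actually forms $QC_{B_{r'+1}}$ from the votes, and its round-$(r'+2)$ proposal multicasts that QC so that every honest replica sees the 2-chain $B \longleftarrow QC_B \longleftarrow B_{r'+1} \longleftarrow QC_{B_{r'+1}}$ and locally commits $B$. The rest of your outline (using Lemma~\ref{lem:rounds} to reach the chosen rounds, Lemma~\ref{lem:roundsync} for the $2\Delta$ synchronization, Lemma~\ref{lem:optresp} for unanimous voting, and the $4\Delta$ timer bookkeeping) matches the paper and is sound; only the two-versus-three leader decomposition and the eventual-dissemination step need to be repaired.
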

\begin{proof}
Since the leaders are determined by round-robin and the number of byzantine replicas is bounded by $f$,
we can find round $r' > r$ such that rounds $r', r'+1, r'+2$ all have honest leaders.

By Lemma~\ref{lem:rounds} honest nodes enter increasing rounds indefinitely.
Due to Lemma~\ref{lem:roundsync},
  all honest replicas receive round $r'$ proposal with block $B$ from the leader within
  $2\Delta$ time of starting their round $r'$ timer
  (by Lemma~\ref{lem:trigger} triggering ones that haven't yet entered round $r'$ to do so).
By Lemma~\ref{lem:optresp}, honest replicas accept the proposal and vote for it.
Within time $\Delta$ their votes are delivered to the leader of round $r'+1$,
  who forms a QC extending $B$ and sends a proposal with a block $B_{r'+1}$.
This proposal will be received by honest replicas within another $\Delta$ time,
  by Lemma~\ref{lem:trigger} triggering them to enter round $r'+1$ before the local timer
  of $4 \Delta$ for round $r'$ expires.

By Lemma~\ref{lem:optresp}, every honest replica accepts the proposal, prepares a vote
  and sends it to round $r'+2$ leader, who is also honest.  
At this point, since $f+1$ honest replicas voted for $B$,
  by Definition~\ref{def:commit}, $B$ is globally direct-committed.

Continuing the argument, the honest leader of round $r'+2$ receives the votes to form the round $r'+1$ QC
  after at most $5\Delta$ time of the first honest validator entering round $r'$.
It then prepares and sends round $r'+2$ proposal that extends $QC_{B_{r'+1}}$.
After at most another $\Delta$ time all honest replicas receive this proposal,
  leading them to enter round $r'+2$ (before local timer for round $r'+1$ expires\footnote{Analogous to round $r'$. Moreover, as the round $r'+1$ leader enters the round first, no replica spends more than $3\Delta$ time in round $r'+1$.})
  and locally direct-commit $B$.
\end{proof}

Since we assume that each client transaction will be repeatedly proposed by honest replicas until it is committed (see Section~\ref{sec:prelim}), eventually each client transaction will be committed by all honest replicas.

\section{Correctness of \async}
\label{sec:ditto-proof}
\subsection{Safety}
We follow the same notation from Appendix~\ref{sec:jolteon-proof} with the following additional notations.
\begin{itemize}
    \item We say a block $B$ is sync-committed if it is directly committed due to a 2-chain $B \longleftarrow QC \longleftarrow B' \longleftarrow QC'$ in step {\bf Commit}.
    \item We say a block $B$ is async-committed if it is directly committed due to being the output of \mvba in step {\bf Exit Fallback}.
\end{itemize}
We redefine the notion of global direct-commit as follows.
\begin{definition}[Global direct-commit]
\label{def:g-commit}
We say that a block $B$ is \emph{globally direct-committed} if $f+1$ honest replicas each successfully perform the \textbf{Vote} step on block $B'$ proposal in round $B.r+1$ \blue{and view $B.v$}, such that $B'.qc$ certifies $B$.
These \textbf{Vote} calls invoke \textbf{Lock}, setting $qc_{high} \gets B'.qc$,
  and return $f+1$ matching votes (that could be used to form a $QC_{B'}$ with $f$ other matching votes). 
\end{definition}

From the definition, any block that is sync-committed is also globally direct-committed.

\begin{lemma}
\label{lem:0}
    For two certified blocks $B,B'$, if $B.rank=B'.rank$ then $B=B'$.
\end{lemma}
\begin{proof}
    Consider any view $v$, according to the protocol specification, the lowest ranked block in each view is the output of \mvba.  By the Agreement property of \mvba, only one block can be certified in step {\bf Exit Fallback} of view $v$.
    For later rounds of view $v$, due to quorum intersection, only one block can be certified for each round.
\end{proof}

\begin{lemma}
\label{lem:1}
    If a block $B$ is globally direct-committed, then any certified block $B'$ s.t. $B'.v=B.v$ and $B'.r \geq B.r$ must extend $B$, i.e., $B \longleftarrow^* B'$.
\end{lemma}
\begin{proof}
    Follows from the proof of Lemma~\ref{lem:keysafety}.
\end{proof}

\begin{lemma}
\label{lem:2}
    If a block $B$ is globally direct-committed, then any block $B'$ decided by $\mvba_{B.v+1}$ must extend $B$, i.e., $B \longleftarrow^* B'$.
\end{lemma}
\begin{proof}
    Let $v=B.v$.
    By Definition~\ref{def:g-commit}, $f+1$ honest replicas execute the \textbf{lock} step and set $\hqc$ such that $\hqc.rank\geq B.rank$.
    None of these honest replicas may have previously timed out in round $B.r + 1$, and timing out stops voting in a round (but the replicas voted for $B'$).
    Since $\hqc$ is never decreased, any of the above $f+1$ honest replicas must have $\hqc.rank\geq B.rank$ when sending the timeout message of view $v$.
    By quorum intersection, any $2f+1$ timeout messages of view $v$ contain a message from one of these honest replicas. 
    Therefore, any honest replica that sets $\mode=true$ (by receiving $2f+1$ timeout messages of view $v$) must update its $\hqc$ such that $\hqc.rank\geq B.rank$.

    Let $B'$ be the block decided by $\mvba_{v+1}$, so $B'.v=v+1$.
    For the sake of contradiction, suppose $B'$ does not extend $B$.
    Let $B''$ be the parent block of $B'$, we have $B''$ also does not extend $B$. By the External Validity property of \mvba, $B'$ must satisfy $f((B',\sigma'),\hqc)=1$ for at least one honest replica $h$ where $h$ has $\hqc$ when invoking $\mvba_{v+1}$. According to the definition of $f$, we have $B''.v=v$.
    There are two cases.
    \begin{itemize}
        \item $B''.rank\geq \hqc.rank$. Since $h$ has $\hqc.rank\geq B.rank$ when invoking $\mvba_{v+1}$, we have $B''.rank\geq B.rank$. Since $B''.v=B.v=v$, by Lemma~\ref{lem:1}, $B''$ must extend $B$, contradiction. 
        \item There exists a threshold signature $\sigma'$ on $(\texttt{proof}, B'.v, B'.qc)$. According to the protocol specification ({\bf Ack}), at least $2f+1$ distinct replicas have their $\hqc$ such that $B''.rank=B'.qc.rank\geq \hqc.rank$ when handling the $(\texttt{proof}, B'.v, B'.qc)$ message.
        Since honest replicas will not decrease their $\hqc$, and any honest replica has $\hqc.rank\geq B.rank$ when entering fallback, we conclude $B''.rank\geq \hqc.rank\geq B.rank$.  Since $B''.v=B.v=v$, by Lemma~\ref{lem:1}, $B''$ must extend $B$, again contradiction. 
    \end{itemize}
\end{proof}

\begin{lemma}
\label{lem:3}
    If a block $B$ is async-committed, then any certified block $B'$ s.t. $B'.v=B.v$ and $B'.r \geq B.r$ must extend $B$, i.e., $B \longleftarrow^* B'$.
\end{lemma}
\begin{proof}
    Consider the view $B.v$, according to the protocol specification, the lowest ranked block in view $B.v$ is the decision of $\mvba_{B.v}$, i.e., block $B$. At least $2f+1$ honest replicas set their $\hqc$ such that $\hqc.rank\geq B.rank$ when $\mvba_{B.v}$ decides. 
    
    Suppose for the sake of contradiction, there exists a certified block $B'$ s.t. $B'.v=B.v$ and $B'.r \geq B.r$, but $B'$ does not extend $B$.
    Let $B''$ denote the ancestor block of $B'$ such that $B''.v=B.v$ and has the lowest rank. Let $B'''$ denote the parent block of $B''$, then $B'''.v<B.v$. Since $B$ is the lowest ranked block in view $B.v$, and honest replicas vote with increasing round numbers in the same view, by quorum intersection at least one honest replica vote for $B$ and then vote for $B''$. When $h$ votes for $B$, it updates its $\hqc$ such that $\hqc.rank\geq B.rank$. Then, $h$ will not vote for $B''$ since $B'''.rank<B.rank\leq \hqc.rank$, contradiction. 
    
\end{proof}

\begin{lemma}
\label{lem:4}
    If a block $B$ is async-committed, then any block $B'$ decided by $\mvba_{B.v+1}$ must extend $B$, i.e., $B \longleftarrow^* B'$.
\end{lemma}
\begin{proof}
    At least $2f+1$ honest replicas set their $\hqc$ such that $\hqc.rank\geq B.rank$ when $\mvba_{B.v}$ decides. 
    By a similar proof of Lemma~\ref{lem:2}, any block $B'$ decided by $\mvba_{B.v+1}$ must extend $B$.
\end{proof}

\begin{lemma}
\label{lem:key1}
    For every certified block $B'$ s.t.\ $B'.rank \geq B.rank$ such that $B$ is async-committed, $B \longleftarrow^* B'$. 
\end{lemma}
\begin{proof}
    We prove by induction on the view numbers.
    For the base case of view $B.v$, by Lemma~\ref{lem:3}, any certified block of view $B.v$ with rank $\geq B.rank$ must extend $B$.

    For the induction step, suppose the lemma holds up to view $v\geq B.v$, and we will prove it holds also for view $v+1$.
    By the induction assumption, the async-committed block $B_v$ of view $v$ extends $B$, since $B''$ is certified.
    By Lemma~\ref{lem:4}, the async-committed block $B_{v+1}$ of view $v+1$ extends $B_v$, which implies that $B_{v+1}$ extends $B$.
    Then by Lemma~\ref{lem:3}, any certified block of view $v+1$ extends $B_{v+1}$, which also extends $B$.

    By induction, the lemma holds.
\end{proof}

\begin{lemma}
\label{lem:key2}
    For every certified block $B'$ s.t.\ $B'.rank \geq B.rank$ such that $B$ is globally direct-committed, $B \longleftarrow^* B'$. 
\end{lemma}
\begin{proof}
    By Lemma~\ref{lem:1}, any certified block of view $B.v$ and rank $\geq B.rank$ extends $B$.
    By Lemma~\ref{lem:2}, the async-committed block $B''$ of view $B.v+1$ extends $B$.
    Then, by Lemma~\ref{lem:key1}, any certified block $B'$ with $B'.v\geq B.v+1$ and $B'.rank \geq B.rank$ extends $B''$, which implies that $B'$ extends $B$.
\end{proof}

\begin{theorem}
\label{thm:agreement}
For every two committed blocks $B,B'$, either $B \longleftarrow^* B'$ or $B' \longleftarrow^* B$.  
\end{theorem}
\begin{proof}
    Let $B_1, B_2$ be any two blocks each directly committed due to either a sync-commit or an async-commit.
    By Lemma~\ref{lem:key1} and~\ref{lem:key2}, the fact that every sync-committed block is globally direct-committed, and the fact that every sync-committed or async-committed block is certified, we have $B_1,B_2$ extending one another.
    For any committed block, there exists a block extending it and gets either sync-committed or async-committed.
    Therefore, any two committed blocks also extend one another.
\end{proof}

Hence all honest replicas commit the same block at each position in the blockchain.
Furthermore, since all honest replicas commit the transactions in one block following the same order, honest replicas do not commit different transactions at the same log position. 

\subsection{Liveness}

\begin{lemma}
\label{lem:liveness:sync}
    If the network is synchronous and all replicas are honest, then all honest replicas keep committing new blocks with increasing round numbers.
\end{lemma}
\begin{proof}
    For the initial round (say round $0$), the leader proposes a genesis block $B_0$.
    Since the network is synchronous and all replicas are honest, all replicas will receive and vote for the block, and the votes will be received by the next leader. 
    Similarly, the next round-$1$ leader proposes the round-$1$ block $B_1$ extending $B_0$, gets voted by all honest replicas, and then the round-$2$ leader proposes the round-$2$ block $B_2$ extending $B_1$. By simple induction, all leaders keep proposing blocks and all blocks will be certified. Then, according to the commit rule, all replicas keep committing new blocks with increasing round numbers.
\end{proof}

\begin{lemma}
\label{lem:livenss:mvba}
    If an honest replica $h$ invokes $\mvba_v(B,\sigma)$, then for any honest replica $h'$, its predicate $f((B,\sigma), \hqc)=1$ where $\hqc$ is the high-QC of $h'$ when invoking $\mvba_v$. 
\end{lemma}
\begin{proof}
    Let $B'$ be the highest committed block with view $\leq v-1$.
    If $B'$ is sync-committed due to a 2-chain $B'\longleftarrow B''.qc \longleftarrow B''$, then from the proof of Lemma~\ref{lem:2}, any honest replica has its $\hqc.rank\geq B'.rank$ when entering the fallback (setting $\mode=true$). Also, $B''$ is the highest certified block in view $B'.v$, otherwise $B''$ should be the highest committed block. Note that $B''.r = B'.r+1$.
    If $B'$ is async-committed, then by the Agreement property of \mvba and step {\bf Exit Fallback}, all honest replicas have their $\hqc.rank\geq B'.rank$ when entering the fallback (setting $\mode=true$). Similarly, either $B'$ is the highest certified block in view $B'.v$, or there exists a higher certified block $B''$ in view $B'.v$ such that $B'\longleftarrow B''.qc \longleftarrow B''$. Note that $B''.r = B'.r+1$.
    Therefore, all honest replicas have their $\hqc.rank\geq B'.rank$ when entering the fallback (setting $\mode=true$), and any QC of view $B'.v$ has round $\leq B'.r+1$.

    If the honest replica $h$ invokes $\mvba_v(B,\sigma)$ with $\sigma\neq \bot$, it is externally valid for any honest replica according to the definition of predicate.
    
    If the honest replica $h$ invokes $\mvba_v(B,\sigma)$ with $\sigma= \bot$, according to the protocol specification ({\bf MVBA}), we have $B.qc.rank>\hqc.rank\geq B'.rank$ where $\hqc$ is the high-QC of $h$ when $h$ enters the fallback (sets $\mode=true$). As proved, any QC of view $B'.v$ has round $\leq B'.r+1$. Thus, $B.qc.rank\geq \hqc.rank$ where $\hqc$ is the high-QC of any honest replica when invoking $\mvba_v$. 
    According to the definition of predicate, $f((B,\bot), \hqc)=1$ since $B.qc.rank\geq \hqc.rank$ for any honest replica with $\hqc$ when invoking $\mvba_v$.

\end{proof}

\begin{lemma}
\label{lem:liveness:async}
    If the network is asynchronous, all honest replicas keep committing new blocks with increasing ranks.
\end{lemma}
\begin{proof}
    If at least one honest replica keeps committing new blocks in the {\bf Commit} step, then the committed blocks have increasing round numbers and all honest replicas eventually receive and commit these blocks.
    Otherwise, let $v$ be the highest view that any block is committed, and suppose block $B$ is async-committed in view $v$ ($\mvba_v$ decides this block). By the Agreement and Termination properties of \mvba, all honest replicas eventually decide $B$ and enter view $v$. 
    Since no honest replica commits new blocks, eventually all honest replicas will timeout view $v$, and enter fallback. Since the steps {\bf Enter Fallback, Ack, MVBA} are non-blocking, all honest replicas eventually invoke $\mvba_{v+1}$.
    By Lemma~\ref{lem:livenss:mvba}, any honest replica's input is externally valid to all replicas.
    By the Termination property of \mvba, all honest replicas will eventually decide one block from $\mvba_{v+1}$, which has a higher rank than previously committed blocks. Hence, all honest replicas keep committing new blocks with increasing ranks. 
\end{proof}

\begin{theorem}
    Each client transaction is eventually committed by all honest replicas.
\end{theorem}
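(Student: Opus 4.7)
The plan is to reduce liveness to two claims: (i) honest replicas keep committing new blocks (in expectation, within a bounded number of rounds/fallbacks), and (ii) every outstanding transaction is repeatedly included in some honestly-proposed block, so once enough new blocks are committed, the transaction is committed by all honest replicas. Claim (ii) is immediate from the standing assumption in \Cref{sec:prelim} that honest replicas re-propose undelivered transactions; the work lies in claim (i), which splits along the sync/async dichotomy.

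For the synchronous case, I would mimic the \sync liveness argument (\Cref{sec:syncproof}). After GST, with timeouts set to a sufficiently large multiple of $\Delta$, I first show a view-synchronization statement: if an honest replica enters the fallback of view $v$, then all honest replicas eventually enter the fallback of some view $v'\ge v$. This follows because the timeout message of view $v$, and any $\ftc$ or $\coinqc$ it triggers, gets forwarded to all honest replicas, and the \textbf{Enter Fallback} step fires on the first such certificate of a view $\ge v_{cur}$. Combined with \Cref{lem:6}, this gives: whenever honest replicas enter a fallback, they all eventually exit it in a common view $v+1$ with a common elected leader $L$, and with probability $2/3$ a new block is committed. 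Since the leader $L$ is chosen uniformly after the adversary has committed to the $2f+1$ f-chains, with constant probability $L$ is also honest, and the round immediately following the fallback inherits a well-formed high $qc_{high}$ at every honest replica.

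Next I would argue that once all honest replicas are in the steady state of the same view $v+1$ with the same $qc_{high}$ (rank), and the network is synchronous, the \sync-style argument applies: consecutive rounds with honest leaders produce a 2-chain whose tail is locally committed by every honest replica, giving a new committed block. If the leader of some round is Byzantine, either the honest replicas still commit through the next honest leader within $O(1)$ rounds (because the round-robin schedule guarantees an honest leader within $f+1$ rounds), or they time out and re-enter the fallback, at which point the previous paragraph applies. Hence every sufficiently long interval after GST either commits a new block on the sync path or triggers a fallback that, with probability $2/3$, commits a new block; a standard Borel--Cantelli style argument shows that the number of committed blocks grows without bound almost surely.

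The main obstacle I anticipate is the bookkeeping at the sync/async boundary: I need to rule out scenarios in which honest replicas are split across views or modes in a way that prevents either path from making progress (e.g., some honest replicas still in the fallback of view $v$ while others have exited to view $v+1$ and time out before the stragglers arrive). Handling this requires carefully tracking how $\coinqc$ messages are forwarded in \textbf{Exit Fallback} and how the first proposal of a new view carries the $\coinqc$ of the previous view (so any honest replica still in the old fallback is dragged forward upon receiving such a proposal or its induced QC). Once this synchronization lemma is established, composing it with \Cref{lem:6} and the re-proposal assumption yields the theorem.
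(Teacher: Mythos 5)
Your high-level decomposition (keep committing blocks $+$ re-proposal assumption from \Cref{sec:prelim}) matches the paper's, and your fallback branch (everyone enters the fallback of a common view, then Lemma~\ref{lem:6} gives a commit with probability $2/3$) is the same as the paper's. But the paper's proof of the first claim is an induction on view numbers built around a different, GST-free dichotomy: within a view, either \emph{all} honest replicas eventually set $\mode=true$ (then Lemma~\ref{lem:6} applies and everyone moves to the next view), or \emph{some} honest replica never does. In the latter case that replica's timer must be reset forever, which is only possible if an infinite sequence of QCs is produced in that view; by Lemma~\ref{lem:2} consecutive certified blocks have consecutive round numbers, so every pair of consecutive QCs is already a $2$-chain and commits happen indefinitely. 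No appeal to GST, honest leaders, or the \sync liveness machinery is needed anywhere.

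This difference matters because your two branches, as written, are not exhaustive. Your steady-state branch is explicitly conditioned on ``after GST,'' and your fallback branch requires the fallback to actually be triggered. Consider an execution in which GST never arrives but the adversary delivers just enough QCs, just in time, that no $2f+1$ honest replicas ever time out in the same view: the fallback is never entered, and your post-GST honest-leader argument never applies, so neither branch fires --- yet \async's liveness theorem must cover this execution (liveness under asynchrony is the point of the protocol). The execution is in fact fine, but for the paper's reason --- the vote rule $r=qc.r+1$ forces consecutive QCs to form $2$-chains, so ``QCs keep forming'' already implies ``blocks keep committing'' --- and that observation is the missing idea in your proposal. Separately, your insistence on an explicit view-synchronization lemma (all honest replicas end up in the fallback of the \emph{same} view, and stragglers are dragged forward by forwarded $\coinqc$'s and first-block-of-view proposals) is a legitimate strengthening: the paper invokes Lemma~\ref{lem:6} under the hypothesis that all honest replicas enter the fallback of view $v$ without explicitly establishing it, so that part of your plan fills a gap in the paper rather than creating one. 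Likewise your Borel--Cantelli step makes precise what the paper leaves at ``with high probability.'' One small correction: a fallback commit does not require the elected leader to be honest, only that its f-chain is among the $2f+1$ completed ones witnessed by some honest replica, which is where the $2/3$ comes from.
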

\begin{proof}
    By Lemma~\ref{lem:liveness:sync} and~\ref{lem:liveness:async}, all honest replicas keep committing new blocks with increasing ranks.
    Since we assume that each client transaction will be repeatedly proposed by honest replicas until it is committed (see Section~\ref{sec:prelim}), eventually each client transaction will be committed by all honest replicas.
\end{proof}



\end{document}